\newcommand{\blue}[1]{{\color{blue}#1}}
\newcommand{\red}[1]{{\color{red}#1}}
\newcommand{\green}[1]{{\color{green}#1}}
\newcommand{\tick}{\blue{\m{tick}}}
\newcommand{\delay}{\red{\m{delay}}}
\newcommand{\when}[1]{\red{\m{when?}\;#1}}
\newcommand{\now}[1]{\red{\m{now!}\;#1}}
\newcommand{\noww}{\red{\m{now!}}}
\newcommand{\whenn}{\red{\m{when?}}}
\newcommand{\vdashi}{\vdash^{\!\!\scriptscriptstyle i}}
\newenvironment{sill}{\begin{quote}\begin{tabbing}}{\end{tabbing}\end{quote}}
\newcommand{\tock}{`}
\newcommand{\wsubt}{<:}
\newcommand{\frank}[1]{\green{[[Frank says: #1]]}}
\newcommand{\m}[1]{\mathsf{#1}}
\newcommand{\CC}{\mathcal{C}}
\newcommand{\DD}{\mathcal{D}}
\newcommand{\EE}{\mathcal{E}}
\newcommand{\palign}[1]{\raisebox{-0.75em}{#1}}
\newcommand{\mi}[1]{\mbox{\it #1}}
\newcommand{\mc}[1]{\mathcal{#1}}
\newcommand{\uup}[2]{{{}^{#1}_{#2}}{{\uparrow}\kern-0.2em{\downarrow}}}
\newcommand{\ddown}[2]{{{}^{#1}_{#2}}{{\downarrow}\kern-0.2em{\uparrow}}}
\newcommand{\semi}{\mathrel{;}}
\newcommand{\vvdash}{\mathrel{\vdash\kern-0.8ex\vdash}}
\newcommand{\lolli}{\multimap}
\newcommand{\tensor}{\otimes}
\newcommand{\with}{\mathbin{\binampersand}}
\newcommand{\one}{\mathbf{1}}
\newcommand{\Dia}{\Diamond}
\newcommand{\Next}{\raisebox{0.3ex}{$\scriptstyle\bigcirc$}}
\begin{document}

\title[Parallel Complexity Analysis]{Parallel Complexity Analysis with Temporal Session Types}

\author{Ankush Das}
\affiliation{
  \institution{Carnegie Mellon University}
  \country{USA}
}
\email{ankushd@cs.cmu.edu}

\author{Jan Hoffmann}
\affiliation{
  \institution{Carnegie Mellon University}
  \country{USA}
}
\email{jhoffmann@cmu.edu}

\author{Frank Pfenning}
\affiliation{
  \institution{Carnegie Mellon University}
  \country{USA}
}
\email{fp@cs.cmu.edu}

\begin{abstract}
  We study the problem of parametric parallel complexity
analysis of concurrent, message-passing programs.
To make the analysis local and compositional, it is based on a
conservative extension of binary session types, which structure the
type and direction of communication between processes and stand in a
Curry-Howard correspondence with intuitionistic linear logic.
The main innovation is to enrich session types with the \emph{temporal
  modalities} \emph{next} ($\Next A$), \emph{always} ($\Box A$), and
\emph{eventually} ($\Dia A$), to additionally prescribe the timing of
the exchanged messages in a way that is precise yet flexible.
The resulting \emph{temporal session types} uniformly express
properties such as the message rate of a stream, the latency of a
pipeline, the response time of a concurrent queue, or the span of a
fork/join parallel program.
The analysis is parametric in the cost model and the presentation
focuses on communication cost as a concrete example.
The soundness of the analysis is established by proofs of progress and
type preservation using a timed multiset rewriting semantics.
Representative examples illustrate the scope and usability of the
approach.




\end{abstract}

\maketitle

\section{Introduction}
\label{sec:intro}

For sequential programs, several type systems and program analyses
have been proposed to structure,
formalize~\citep{LagoG11,DannerLR15,CicekBGGH16}, and
automate~\citep{GulwaniMC09,HoffmannW15,AvanziniLM15} complexity analysis.
Analyzing the complexity of concurrent, message-passing processes
poses additional challenges that these systems do not address.
To begin with, we need information about the possible interactions
between processes to enable compositional and local reasoning
about concurrent cost.

Session types~\citep{Honda98esop} provide a structured way to
prescribe communication behavior between message-passing processes and
are a natural foundation for compositional, concurrent complexity
analysis. In particular, we use a system of binary session types that
stands in a Curry-Howard correspondence with intuitionistic linear
logic~\citep{Caires10concur,Caires16mscs}. Our communication model is
\emph{asynchronous} in the sense of the asynchronous $\pi$-calculus:
sending always succeeds immediately, while receiving blocks until a
message arrives.

In addition to the structure of communication, the timing of messages
is of central interest for analyzing concurrent cost. With information
on message timing we may analyze not only properties such as the rate
or latency with which a stream of messages can proceed through a
pipeline, but also the span of a parallel computation, which can be
defined as the time of the final response message assuming maximal
parallelism.

There are several possible ways to enrich session types with timing
information. A challenge is to find a balance between precision and
flexibility. We would like to express precise times according to a
global clock as in synchronous dataflow languages whenever that is
possible. However, sometimes this will be too restrictive.  For
example, we may want to characterize the response time of a concurrent
queue where enqueue and dequeue operations arrive at unpredictable
intervals.


In this paper, we develop a type system that captures the parallel
complexity of session-typed message-passing programs by adding
\emph{temporal modalities} \emph{next} ($\Next A$), \emph{always}
($\Box A$), and \emph{eventually} ($\Dia A$), interpreted over a
linear model of time.  
When considered as types, the temporal modalities allow us to express
properties of concurrent programs such as the \emph{message rate} of a
stream, the \emph{latency} of a pipeline, the \emph{response time} of
concurrent data structure, or the \emph{span} of a fork/join parallel
program, all in the same uniform manner. Our results complement prior
work on expressing the \emph{work} of session-typed processes in the
same base language~\citep{Das18arxiv}. Together, they form a
foundation for analyzing 
the parallel implementation complexity of session-typed processes.

The way in which we construct the type system is conservative over the
base language of session types, which makes it quite general and
easily able to accommodate various concrete cost models. Our language
contains standard session types and process expressions, and their
typing rules remain unchanged. They correspond to processes that do
not induce cost and send all messages at the same constant time $0$.

To model computation cost we introduce a new syntactic form $\delay$,
which advances time by one step.
To specify a particular cost semantics we take an ordinary,
non-temporal program and add delays capturing the intended cost.  For
example, if we decide only the blocking operations should cost one
unit of time, we add a delay before the continuation of every
receiving construct.  If we want sends to have unit cost as well, we
also add a delay immediately after each send operation.
Processes that contain delays cannot be typed using standard session
types.

To type processes with non-zero cost, we first introduce the type
$\Next A$, which is inhabited only by the process expression $(\delay
\semi P)$. This forces time to advance on all channels that $P$ can
communicate along. The resulting types prescribe the \emph{exact}
time a message is sent or received and sender and receiver are
precisely synchronized.

As an example, consider a stream of bits terminated by $\m{\$}$,
expressed as the recursive session type
\begin{sill}
  $\m{bits} = {\oplus}\{\m{b0} : \m{bits}, \m{b1} : \m{bits}, \m{\$} : \one\}$
\end{sill}
where ${\oplus}$ stands for \emph{internal choice} and $\one$ for
\emph{termination}, ending the session.  A simple cost model for
asynchronous communication prescribes a cost of one unit of time for
every receive operation. A stream of bits then needs to delay every
continuation to give the recipient time to receive the message,
expressing a \emph{rate} of one. This can be captured precisely with
the temporal modality $\Next A$:
\begin{sill}
  $\m{bits} = {\oplus}\{\m{b0} : \Next \m{bits}, \m{b1} : \Next \m{bits}, \m{\$} : \Next \one\}$
\end{sill}
A transducer $\mi{neg}$ that negates each bit it receives along
channel $x$ and passes it on along channel $y$ would be typed as
\begin{sill}
  $x : \m{bits} \vdash \mi{neg} :: (y : \Next \m{bits})$
\end{sill}
expressing a \emph{latency} of one.  A process $\mi{negneg}$ that puts two
negations in sequence has a latency of two, compared with $\mi{copy}$
which passes on each bit, and $\mi{id}$ which terminates and
identifies the channel $y$ with the channel $x$, short-circuiting the
communication.
\begin{sill}
  $x : \m{bits} \vdash \mi{negneg} :: (y : \Next \Next \m{bits})$ \\
  $x : \m{bits} \vdash \mi{copy} :: (y : \Next \m{bits})$ \\
  $x : \m{bits} \vdash \mi{id} :: (y : \m{bits})$
\end{sill}
All these processes have the same extensional behavior, but different
latencies. They also have the same rate since after the
pipelining delay, the bits are sent at the same rate they are
received, as expressed in the common type $\m{bits}$ used
in the context and the result.

While precise and minimalistic, the resulting system is often too
precise for typical concurrent programs such as pipelines or servers.
We therefore introduce the dual type formers $\Dia A$ and $\Box A$ to
talk about varying time points in the future.  Remarkably, even if
part of a program is typed using these constructs, we can still make
precise and useful statements about other aspects.

For example, consider a transducer $\mi{compress}$ that shortens a
stream by combining consecutive 1 bits so that, for example,
$00110111$ becomes $00101$.  For such a transducer, we cannot bound
the latency statically, even if the bits are received at a constant
rate like in the type $\m{bits}$. So we have to express that after
seeing a 1 bit we will \emph{eventually} see either another bit or the
end of the stream.  For this purpose, we introduce a new type
$\m{sbits}$ with the same message alternatives as $\m{bits}$, but
different timing.  In particular, after sending $\m{b1}$ we have
to send either the next bit or end-of-stream \emph{eventually}
($\Dia \m{sbits}$), rather than immediately.
\begin{sill}
  $\m{sbits} = {\oplus}\{\m{b0} : \Next \m{sbits}, \m{b1} : \Next \Dia \m{sbits}, \m{\$} : \Next \one\}$ \\
  $x : \m{bits} \vdash \mi{compress} :: (y : \Next \m{sbits})$
\end{sill}
We write $\Next \Dia \m{sbits}$ instead of $\Dia \m{sbits}$ for the
continuation type after $\m{b1}$ to express that there will always
be a delay of at least one; to account for the unit cost of receive
in this particular cost model.

The dual modality, $\Box A$, is useful to express, for example, that a
server providing $A$ is \emph{always} ready, starting from ``now''.
As an example, consider the following temporal type of an interface to
a process of type $\Box \m{queue}_A$ with elements of type $\Box A$.
It expresses that there must be at least three time units between
successive enqueue operations and that the response to a dequeue
request is immediate, only one time unit later ($\with$ stands for
external choice, the dual to internal choice).
\begin{sill}
  $\m{queue}_{A} = {\with}\{$ \= $\m{enq} : \Next (\Box A \lolli \Next^3 \Box \m{queue}_A),$ \\
  \> $\m{deq} : \Next {\oplus}\{$ \= $\m{none} : \Next \one,$ \\
  \>\> $\m{some} : \Next (\Box A \tensor \Next \Box \m{queue}_A)\,\}\,\}$
\end{sill}
As an example of a \emph{parametric} cost analysis, we can give the
following type to a process that appends inputs $l_1$ and $l_2$ to
yield $l$, where the message rate on all three lists is $r+2$ units of
time (that is, the interval between consecutive list elements needs to
be at least 2).
\begin{sill}
  $l_1 : \m{list}_A[n], l_2 : \Next^{(r+4)n+2}\, \m{list}_A[k]
  \vdash \mi{append} :: (l : \Next \Next \m{list}_A[n+k])$
\end{sill}
It expresses that $\mi{append}$ has a latency of two units of time and
that it inputs the first message from $l_2$ after $(r+4)n+2$ units of
time, where $n$ is the number of elements sent along $l_1$.

To analyze the span of a fork/join parallel program, we capture the
time at which the (final) answer is sent. For example, the type
$\m{tree}[h]$ describes the span of a process that computes
the parity of a binary tree of height $h$ with boolean values at the
leaves. The session type expresses that the result of the computation
is a single boolean that arrives at time $5h+3$ after the $\m{parity}$
request.
\begin{sill}
$\m{tree}[h] = {\with}\{\,\m{parity} : \Next^{5h+3}\, \m{bool}\,\}$
\end{sill}

In summary, the main contributions of the paper are (1) a generic
framework for parallel cost analysis of asynchronously communicating
session-typed processes rooted in a novel combination of temporal and
linear logic, (2) a soundness proof of the type system with
respect to a timed operational semantics, showing progress and
type preservation (3) instantiations of the framework with different
cost models, e.g. where either just
receives, or receives and sends, cost one time unit each, and (4)
examples illustrating the scope of our method. Our technique for
proving progress and preservation does not require dependency graphs
and may be of independent interest.  We further provide decidable
systems for \emph{time reconstruction} and \emph{subtyping} that
greatly simplify the programmer's task. They also enhance modularity
by allowing the same program to be assigned temporally different
types, depending on the context of use.

Related is work on space and time complexity analysis of interaction
nets by \citet{GimenezPOPL16}, which is a parallel execution model for
functional programs.  While also inspired by linear logic and, in
particular, proof nets, it treats only special cases of the additive
connectives and recursive types and does not have analogues of the
$\Box$ and $\Dia$ modalities. It also does not
provide a general source-level programming notation with a
syntax-directed type system. On the other hand they incorporate
sharing and space bounds, which are beyond the scope of this paper.

Another related thread is the research on timed multiparty session
types~\citep{TMSTConcur14} for modular verification of real-time
choreographic interactions.  Their system is based on explicit global
timing interval constraints, capturing a new class of communicating
timed automata, in contrast to our system based on binary session
types in a general concurrent language.  Therefore, their system has
no need for general $\Box$ and $\Dia$ modalities, the ability to pass
channels along channels, or the ability to identify channels via
forwarding. Their work is complemented by an expressive dynamic
verification framework in real-time distributed
systems~\citep{Neykova14beat}, which we do not consider.
Semantics counting communication costs for work and span in
session-typed programs were given by~\citet{Silva16linearity},
but no techniques for analyzing them were provided.

The remainder of the paper is organized as follows.  We review our
basic system of session types in Section~\ref{sec:basic}, then
introduce the next-time modality $\Next A$ in Section~\ref{sec:next}
followed by $\Dia A$ and $\Box A$ in Section~\ref{sec:temporal}.  We
establish fundamental metatheoretic type safety properties in
Section~\ref{sec:metatheory} and time reconstruction in
Section~\ref{sec:recon}.  Additional examples in
Section~\ref{sec:examples} are followed by a discussion of further
related work in Section~\ref{sec:related} and a brief conclusion.

\section{The Base System of Session Types}
\label{sec:basic}


The underlying base system of session types is derived from a
Curry-Howard interpretation of intutionistic linear
logic~\citep{Caires10concur,Caires16mscs}. We present it here to fix
our particular formulation, which can be considered the purely linear
fragment of SILL~\citep{Toninho13esop,Pfenning15fossacs}.
Remarkably,
the rules remain exactly the same when we consider temporal extensions
in the next section.
The key idea is that an intuitionistic linear sequent
\[
A_1, A_2, \ldots, A_n \vdash C
\]
is interpreted as the interface to a \emph{process expression} P.
We label each of the antecedents with a channel name $x_i$
and the succedent with channel name $z$.
The $x_i$'s are \emph{channels used by $P$} and $z$ is
the \emph{channel provided by $P$}.
\[
x_1 : A_1, x_2 : A_2, \ldots, x_n : A_n \vdash P :: (z : C)
\]
The resulting judgment formally states that process $P$ provides
a service of session type $C$ along channel $z$, while using the
services of session types $A_1, \ldots, A_n$ provided along channels
$x_1, \ldots, x_n$ respectively. All these channels must be
distinct, and we sometimes implicitly rename them to preserve this
presupposition.  We abbreviate the antecedent of the
sequent by $\Omega$.


\begin{figure}
\centering
\renewcommand{\arraystretch}{1.3}
\begin{tabular}{l|l|l}
\textbf{Type} & \textbf{Provider Action} & \textbf{Session Continuation} \\\hline
${\oplus}\{\ell : A_\ell\}_{\ell \in L}$ & send label $k \in L$ & $A_k$ \\
${\with}\{\ell : A_\ell\}_{\ell \in L}$ & receive and branch on label $k \in L$ & $A_k$ \\
$\one$ & send token $\m{close}$ & \emph{none} \\
$A \tensor B$ & send channel $c : A$ & $B$ \\
$A \lolli B$ & receive channel $c : A$ & $B$
\end{tabular}
\caption{Basic Session Types.  Every provider action has a matching
  client action.}
\vspace{-1em}
\label{fig:basic-types}
\end{figure}

\begin{figure}
\centering
\renewcommand{\arraystretch}{1.3}
\[
\begin{array}{lcllll}
& & \mbox{\bf Expression} & \mbox{\bf Action} & \mbox{\bf Continuation} & \mbox{\bf Rules} \\
P,Q & ::= & x \leftarrow f \leftarrow \overline{e} \semi Q & \mbox{spawn process named $f$} & [a/x]Q & \m{def} \\
& \mid & x{:}A \leftarrow P \semi Q & \mbox{spawn $[a/x]P$} & [a/x]Q & \m{cut} \\
& \mid & c \leftarrow d & \mbox{identify $c$ and $d$} & \mbox{\it none} & \m{id} \\
& \mid & c.k \semi P & \mbox{send label $k$ along $c$} & P & {\oplus}R, {\with}L \\
& \mid & \m{case}\; c\; (\ell \Rightarrow P_\ell)_{\ell \in L} & \mbox{receive label $k$ along $c$} & P_k &
{\oplus}L, {\with}R \\
& \mid & \m{close}\; c & \mbox{close $c$} & \mbox{\it none} & {\one}R \\
& \mid & \m{wait}\; c \semi P & \mbox{wait for $c$ to close} & P & {\one}L \\
& \mid & \m{send}\; c\; d \semi P & \mbox{send $d$ along $c$} & P & {\tensor}R, {\lolli}L \\
& \mid & x \leftarrow \m{recv}\; c \semi P & \mbox{receive $d$ along $c$} & [d/x]P & {\tensor}L, {\lolli}R
\end{array}
\]
\vspace{-1em}
\caption{Basic Process Expressions}
\vspace{-1em}
\label{fig:basic-exps}
\end{figure}

Figure~\ref{fig:basic-types} summarizes the basic session types and
their actions. The process expression for these actions are shown in
Figure~\ref{fig:basic-exps}; the process typing rules in
Figure~\ref{fig:basic-typing}.  The first few examples (well into
Section~\ref{sec:temporal}) only use internal choice, termination, and
recursive types, together with process definitions and forwarding, so
we explain these in some detail together with their formal operational
semantics.  A summary of all the operational semantics rules can be
found in Figure~\ref{fig:basic-opsem}.

\subsection{Internal Choice}

A type $A$ is said to describe a \emph{session}, which is a particular
sequence of interactions.  As a first type construct we consider
\emph{internal choice} ${\oplus}\{\ell : A_\ell\}_{\ell \in L}$, an
$n$-ary labeled generalization of the linear logic connective
$A \oplus B$.  A process that provides
$x : {\oplus}\{\ell : A_\ell\}_{\ell \in L}$ can send any label
$k \in L$ along $x$ and then continue by providing $x : A_k$.  We
write the corresponding process as $(x.k \semi P)$, where $P$ is the
continuation.  This typing is formalized by the \emph{right rule}
${\oplus}R$ in our sequent calculus. The corresponding client branches
on the label received along $x$ as specified by the \emph{left rule}
${\oplus}L$.
\[
\begin{array}{c}
\infer[{\oplus}R]
  {\Omega \vdash (x.k \semi P) :: (x : {\oplus}\{\ell : A_\ell\}_{\ell \in L})}
  {(k \in L) & \Omega \vdash P :: (x : A_k)}
\hspace{2.5em}
\infer[{\oplus}L]
  {\Omega, x{:}{\oplus}\{\ell : A_\ell\}_{\ell \in L} \vdash \m{case}\;x\;
    (\ell \Rightarrow Q_\ell)_{\ell \in L} :: (z : C)}
  {(\forall \ell \in L) &
    \Omega, x{:}A_\ell \vdash Q_\ell :: (z : C)}
\end{array}
\]
We formalize the operational semantics as a system of \emph{multiset
  rewriting rules}~\cite{Cervesato09ic}.  We introduce semantic
objects $\m{proc}(c, t, P)$ and $\m{msg}(c, t, M)$ which mean that
process $P$ or message $M$ provide along channel $c$ and are at time
$t$.  A \emph{process configuration} is a multiset of such objects,
where any two offered channels are distinct.  Communication is
asynchronous, so that a process $(c.k \semi P)$ sends a message $k$
along $c$ and continues as $P$ without waiting for it to be received.
As a technical device to ensure that consecutive messages on a
channel arrive in order, the sender also creates a fresh continuation
channel $c'$ so that the message $k$ is actually represented as
$(c.k \semi c \leftarrow c')$ (read: send $k$ along $c$ and continue as
$c'$).
\[
\begin{array}{ll}
({\oplus}S) & \m{proc}(c, t, c.k \semi P) \; \mapsto \; \m{proc}(c', t, [c'/c]P), \m{msg}(c, t, c.k \semi c \leftarrow c') \quad \mbox{($c'$ fresh)}
\end{array}
\]
When the message $k$ is received along $c$, we select branch $k$ and
also substitute the continuation channel $c'$ for $c$.
\[
\begin{array}{ll}
({\oplus}C) & \m{msg}(c, t, c.k \semi c \leftarrow c'), \m{proc}(d, t, \m{case}\;c\;(\ell \Rightarrow Q_\ell)_{\ell \in L})
\; \mapsto \; \m{proc}(d, t, [c'/c]Q_k)
\end{array}
\]
The \emph{message} $(c.k \semi c \leftarrow c')$ is just a particular
form of process, where $c \leftarrow c'$ is \emph{identity} or
\emph{forwarding}, explained in Section~\ref{sec:id}. Therefore
no separate typing rules for messages are needed; they can be
typed as processes~\citep{Balzer17icfp}.

In the receiving rule we require the time $t$ of the message and
receiver process to match.  Until we introduce temporal types, this is
trivially satisfied since all actions are considered instantaneous
and processes will always remain at time $t = 0$.

The dual of internal choice is \emph{external choice}
${\with}\{\ell : A_\ell\}_{\ell \in L}$, which just reverses the role
of provider and client and reuses the same process notation.  It is
the $n$-ary labeled generalization of the linear logic connective
$A \with B$.

\begin{figure}
\centering
\[
\begin{array}{c}
\infer[\m{cut}]
  {\Omega, \Omega' \vdash (x{:}A \leftarrow P \semi Q) :: (z : C)}
  {\Omega' \vdash P :: (x:A) & \Omega, x : A \vdash Q :: (z : C)}
\hspace{3em}
\infer[\m{id}]
  {y : A \vdash (x \leftarrow y) :: (x : A)}
  {\mathstrut}
\\[1em]
\infer[{\oplus}R]
  {\Omega \vdash (x.k \semi P) :: (x : {\oplus}\{\ell : A_\ell\}_{\ell \in L})}
  {(k \in L) & \Omega \vdash P :: (x : A_k)}
\hspace{3em}
\infer[{\oplus}L]
  {\Omega, x{:}{\oplus}\{\ell : A_\ell\}_{\ell \in L} \vdash \m{case}\;x\;
    (\ell \Rightarrow Q_\ell)_{\ell \in L} :: (z : C)}
  {(\forall \ell \in L) &
    \Omega, x{:}A_\ell \vdash Q_\ell :: (z : C)}
\\[1em]
\infer[{\with}R]
  {\Omega \vdash \m{case}\; x\; (\ell \Rightarrow P_\ell)_{\ell \in L} :: (x : {\with}\{\ell : A_\ell\}_{\ell \in L})}
  {(\forall \ell \in L)
   & \Omega \vdash P_\ell :: (x : A_\ell)}
\hspace{2em}
\infer[{\with}L]
  {\Omega, x{:}{\with}\{\ell : A_\ell\}_{\ell \in L} \vdash (x.k \semi Q) :: (z : C)}
  {\Omega, x{:}A_k \vdash Q :: (z : C)}
\\[1em]
\infer[{\one}R]
  {\cdot \vdash (\m{close}\; x) :: (x : \one)}
  {\mathstrut}
\hspace{3em}
\infer[{\one}L]
  {\Omega, x{:}\one \vdash (\m{wait}\; x \semi Q) :: (z : C)}
  {\Omega \vdash Q :: (z : C)}
\\[1em]
\infer[{\tensor}R]
  {\Omega, y{:}A \vdash (\mbox{send}\; x\; y \semi P) :: (x : A \tensor B)}
  {\Omega \vdash P :: (x : B)}
\hspace{3em}
\infer[{\tensor}L]
  {\Omega, x{:}A \tensor B \vdash (y \leftarrow \m{recv}\; x \semi Q) :: (z : C)}
  {\Omega, y{:}A, x{:}B \vdash Q :: (z : C)}
\\[1em]
\infer[{\lolli}R]
  {\Omega \vdash (y \leftarrow \m{recv}\; x \semi P) :: (x : A \lolli B)}
  {\Omega, y{:}A \vdash P :: (x : B)}
\hspace{3em}
\infer[{\lolli}L]
  {\Omega, x{:}A \lolli B, y{:}A \vdash (\m{send}\; x\; y \semi Q) :: (z : C)}
  {\Omega, x{:}B \vdash Q :: (z : C)}
\\[1em]
\infer[\m{def}]
  {\Omega, \Omega' \vdash (x \leftarrow f \leftarrow \Omega' \semi Q) :: (z : C)}
  {(\Omega' \vdash f = P_f :: (x : A)) \in \Sigma
    & \Omega, x{:}A \vdash Q :: (z : C)}
\end{array}
\]
\vspace{-1em}
\caption{Basic Typing Rules}
\vspace{-1em}
\label{fig:basic-typing}
\end{figure}

\subsection{Termination}

The type $\one$, the multiplicative unit of linear logic, represents
termination of a process, which (due to linearity) is not allowed to
use any channels.
\[
\infer[{\one}R]
  {\cdot \vdash \m{close}\; x :: (x : \one)}
  {\mathstrut}
\hspace{3em}
\infer[{\one}L]
  {\Omega, x{:}\one \vdash (\m{wait}\; x \semi Q) :: (z : C)}
  {\Omega \vdash Q :: (z : C)}
\]
Operationally, a client has to wait for the corresponding closing
message, which has no continuation since the provider terminates.
\[
\begin{array}{llcl}
({\one}S) & \m{proc}(c, t, \m{close}\; c) & \mapsto & \m{msg}(c, t, \m{close}\; c) \\
({\one}C) & \m{msg}(c, t, \m{close}\; c), \m{proc}(d, t, \m{wait}\; c \semi Q)
& \mapsto & \m{proc}(d, t, Q)
\end{array}
\]

\subsection{Forwarding}
\label{sec:id}

A process $x \leftarrow y$ \emph{identifies} the channels
$x$ and $y$ so that any further communication along either $x$ or $y$
will be along the unified channel. Its typing rule corresponds to the
logical rule of \emph{identity}.
\[
\infer[\m{id}]
  {y : A \vdash (x \leftarrow y) :: (x : A)}
  {}
\]
We have already seen this form in the continuations of message
objects.  Operationally, the intuition is realized by
\emph{forwarding}: a process $c \leftarrow d$ \emph{forwards} any
message $M$ that arrives along $d$ to $c$ and vice versa.
Because channels are used linearly the forwarding process
can then terminate, making sure to apply the proper renaming.
The corresponding rules of operational semantics are as follows. 
\[
\begin{array}{llcll}
(\m{id}^+C) & \m{msg}(d, t, M), \m{proc}(c, s, c \leftarrow d) & \mapsto & \m{msg}(c, t, [c/d]M) & (t \geq s) \\
(\m{id}^-C) & \m{proc}(c, s, c \leftarrow d), \m{msg}(e, t, M(c)) & \mapsto & \m{msg}(e, t, [d/c]M(c))
& (s \leq t)
\end{array}
\]
In the last transition, we write $M(c)$ to indicate that $c$ must
occur in $M$, which implies that this message is the sole client of
$c$.  In anticipation of the extension by temporal operators, we do
not require the time of the message and the forwarding process to be
identical, but just that the forwarding process is ready \emph{before}
the message arrives.

\subsection{Process Definitions}

Process definitions have the form $\Omega \vdash f = P :: (x : A)$
where $f$ is the name of the process and $P$ its definition. All
definitions are collected in a fixed global signature $\Sigma$.  We
require that $\Omega \vdash P :: (x : A)$ for every definition,
which allows the definitions to be mutually recursive.  For
readability of the examples, we break a definition into two
declarations, one providing the type and the other the process definition
binding the variables $x$ and those in $\Omega$ (generally omitting
their types):
\[
\begin{array}{l}
\Omega \vdash f :: (x : A) \\
x \leftarrow f \leftarrow \Omega = P
\end{array}
\]
A new instance of a defined process $f$ can be spawned with
the expression
\[
x \leftarrow f \leftarrow \overline{y} \semi Q
\]
where $\overline{y}$ is a sequence of variables matching the
antecedents $\Omega$.  The newly spawned process will use all
variables in $\overline{y}$ and provide $x$ to the continuation $Q$.
The operational semantics is defined by
\[
\begin{array}{ll}
(\m{def}C) &
\m{proc}(c, t, x \leftarrow f \leftarrow \overline{e} \semi Q)
\;\mapsto\;
\m{proc}(a, t, [a/x, \overline{e}/\Omega]P),
\m{proc}(c, t, [a/x]Q)
\quad (\mbox{$a$ fresh})
\end{array}
\]
Here we write $\overline{e}/\Omega$ to denote substitution of the
channels in $\overline{e}$ for the corresponding variables in
$\Omega$.

Sometimes a process invocation is a \emph{tail call}, written without
a continuation as $x \leftarrow f \leftarrow \overline{y}$.  This is a
short-hand for
$x' \leftarrow f \leftarrow \overline{y} \semi x \leftarrow x'$ for a
fresh variable $x'$, that is, we create a fresh channel and
immediately identify it with $x$ (although it is generally implemented
more efficiently).

\begin{figure}
\centering
\[
\begin{array}{lll}
(\m{cut}C) & \m{proc}(c, t, x{:}A \leftarrow P \semi Q) \;\mapsto\; \m{proc}(a, t, [a/x]P), \m{proc}(c, t, [a/x]Q) & \mbox{($a$ fresh)} \\[1ex]
(\m{def}C) &
\m{proc}(c, t, x \leftarrow f \leftarrow \overline{e} \semi Q)
\;\mapsto\;
\m{proc}(a, t, [a/x, \overline{e}/\Omega_f]P_f),
\m{proc}(c, t, [a/x]Q)
& (\mbox{$a$ fresh}) \\[1ex]
(\m{id}^+C) & \m{msg}(d, t, M), \m{proc}(c, s, c \leftarrow d) \;\mapsto\; \m{msg}(c, t, [c/d]M) & (t \geq s) \\
(\m{id}^-C) & \m{proc}(c, s, c \leftarrow d), \m{msg}(e, t, M(c)) \;\mapsto\; \m{msg}(e, t, [d/c]M(c)) & (s \leq t) \\[1ex]
({\oplus}S) & \m{proc}(c, t, c.k \semi P) \;\mapsto\; \m{proc}(c', t, [c'/c]P), \m{msg}(c, t, c.k \semi c \leftarrow c') & \mbox{($c'$ fresh)} \\
({\oplus}C) & \m{msg}(c, t, c.k \semi c \leftarrow c'), \m{proc}(d, t, \m{case}\;c\;(\ell \Rightarrow Q_\ell)_{\ell \in L})
\;\mapsto\; \m{proc}(d, t, [c'/c]Q_k) \\[1ex]
({\with}S) & \m{proc}(d, t, c.k \semi Q) \;\mapsto\; \m{msg}(c', t, c.k \semi c' \leftarrow c), \m{proc}(d, t, [c'/c]Q) & \mbox{($c'$ fresh)} \\
({\with}C) & \m{proc}(c, t, \m{case}\;c\;(\ell \Rightarrow Q_\ell)_{\ell \in L}), \m{msg}(c', t, c.k \semi c' \leftarrow c)
\;\mapsto\; \m{proc}(c', t, [c'/c]Q_k) \\[1ex]
({\one}S) & \m{proc}(c, t, \m{close}\; c) \;\mapsto\; \m{msg}(c, t, \m{close}\; c) \\
({\one}C) & \m{msg}(c, t, \m{close}\; c), \m{proc}(d, t, \m{wait}\; c \semi Q) \;\mapsto\; \m{proc}(d, t, Q)\\[1ex]
({\tensor}S) & \m{proc}(c, t, \m{send}\; c\; d \semi P) \;\mapsto\; \m{proc}(c', t, [c'/c]P), \m{msg}(c, t, \m{send}\; c\; d \semi c \leftarrow c')
& \mbox{($c'$ fresh)} \\
({\tensor}C) & \m{msg}(c, t, \m{send}\; c\; d \semi c \leftarrow c'), \m{proc}(e, t, x \leftarrow \m{recv}\; c \semi Q)
\;\mapsto\; \m{proc}(e, t, [c', d/c, x]Q) \\[1ex]
({\lolli}S) & \m{proc}(e, t, \m{send}\; c\; d \semi Q) \;\mapsto\; \m{msg}(c', t, \m{send}\; c\; d \semi c' \leftarrow c), \m{proc}(e, t, [c'/c]Q) & \mbox{($c'$ fresh)} \\
({\lolli}C) & \m{proc}(c, t, x \leftarrow \m{recv}\; x \semi P), \m{msg}(c', t, \m{send}\; c\; d \semi c' \leftarrow c)
\;\mapsto\; \m{proc}(c', t, [c', d/c, x]P)
\end{array}
\]
\vspace{-1em}
\caption{Basic Operational Semantics}
\vspace{-1em}
\label{fig:basic-opsem}
\end{figure}

\subsection{Recursive Types}
\label{sec:rec-types}

Session types can be naturally extended to include recursive types.
For this purpose we allow (possibly mutually recursive) type
definitions $X = A$ in the signature, where we require $A$ to be
\emph{contractive}~\citep{Gay05acta}.  This means here that $A$ should
not itself be a type name.  Our type definitions are
\emph{equi-recursive} so we can silently replace $X$ by $A$ during
type checking, and no explicit rules for recursive types are needed.

As a first example, consider a stream of bits (introduced in Section
\ref{sec:intro}) defined recursively as
\begin{sill}
  $\m{bits} = {\oplus}\{\m{b0} : \m{bits}, \m{b1} : \m{bits}, \m{\$} : \one\}$
\end{sill}
When considering bits as representing natural numbers, we think of the
least significant bit being sent first.  For example, a process
$\mi{six}$ sending the number $6 = (110)_2$ would be
\begin{sill}
$\cdot \vdash \mi{six} :: (x : \m{bits})$ \\
$x \leftarrow \mi{six} = x.\m{b0} \semi x.\m{b1} \semi x.\m{b1} \semi x.\m{\$} \semi \m{close}\; x$
\end{sill}
Executing $\m{proc}(c_0, 0, c_0 \leftarrow \mi{six})$ yields
(with some fresh channels $c_1, \ldots, c_4$)
\[
\begin{array}{lcl}
\m{proc}(c_0, 0, c_0 \leftarrow \m{six})
& \mapsto^* &
\m{msg}(c_4, 0, \m{close}\; c_4), \\
 & &\m{msg}(c_3, 0, c_3.\m{\$} \semi c_3 \leftarrow c_4), \\
 & &\m{msg}(c_2, 0, c_2.\m{b1} \semi c_2 \leftarrow c_3), \\
 & &\m{msg}(c_1, 0, c_1.\m{b1} \semi c_1 \leftarrow c_2), \\
 & &\m{msg}(c_0, 0, c_0.\m{b0} \semi c_0 \leftarrow c_1)
\end{array}
\]
As a first example of a recursive process definition, consider one
that just copies the incoming bits.
\begin{sill}
$y : \m{bits} \vdash \mi{copy} :: (x : \m{bits})$ \\
$x \leftarrow \mi{copy} \leftarrow y = $ \\
\quad $\m{case}\; y\;$ \= $(\,\m{b0} \Rightarrow $ \= $x.\m{b0} \semi
x \leftarrow \mi{copy} \leftarrow y$ 
\hspace{1em}\= \% received $\m{b0}$ on $y$, send $\m{b0}$ on $x$, recurse \\
\> $\mid \m{b1} \Rightarrow x.\m{b1} \semi
x \leftarrow \mi{copy} \leftarrow y$ 
\>\> \% received $\m{b1}$ on $y$, send $\m{b1}$ on $x$, recurse \\
\> $\mid \m{\$} \Rightarrow x.\m{\$} \semi \m{wait}\; y \semi \m{close}\; x\,)$
\>\> \% received $\m{\$}$ on $y$,
send $\m{\$}$ on $x$, wait on $y$, close $x$
\end{sill}
The process $\m{neg}$ mentioned in the introduction would just swap
the occurrences of $x.\m{b0}$ and $x.\m{b1}$.  We see here an
occurrence of a (recursive) \emph{tail call} to $\mi{copy}$.

A last example in this section: to increment a bit stream we turn
$\m{b0}$ to $\m{b1}$ but then forward the remaining bits unchanged
($x \leftarrow y$), or we turn $\m{b1}$ to $\m{b0}$ but then increment
the remaining stream ($x \leftarrow \mi{plus1} \leftarrow y$) to
capture the effect of the carry bit.
\begin{sill}
$y : \m{bits} \vdash \mi{plus1} :: (x : \m{bits})$ \\
$x \leftarrow \mi{plus1} \leftarrow y = $ \\
\quad $\m{case}\; y\;$ \= $(\,\m{b0} \Rightarrow $ \= $x.\m{b1} \semi
x \leftarrow y$ \\
\> $\mid \m{b1} \Rightarrow x.\m{b0} \semi
x \leftarrow \mi{plus1} \leftarrow y$ \\
\> $\mid \m{\$} \Rightarrow x.\m{\$} \semi \m{wait}\; y \semi \m{close}\; x\,)$
\end{sill}

\section{The Temporal Modality Next ($\Next A$)}
\label{sec:next}


In this section we introduce \emph{actual cost} by explicitly
advancing time. Remarkably, all the rules we have presented so far
remain literally unchanged. As mentioned, they correspond to the
cost-free fragment of the language in which time never advances.  In
addition, we have a new type construct $\Next A$ (read: \emph{next
  $A$}) with a corresponding process construct $(\delay \semi P)$,
which advances time by one unit. In the corresponding typing rule
\[
\infer[{\Next}LR]
  {\Next \Omega \vdash (\delay \semi P) :: (x : \Next A)}
  {\Omega \vdash P :: (x : A)}
\]
we abbreviate $y_1{:}\Next B_1, \ldots, y_m{:}\Next B_m$ by
$\Next (y_1{:}B_1, \ldots, y_m{:}B_m)$.  Intuitively, when
$(\delay \semi P)$ idles, time advances on \emph{all} channels
connected to $P$.  Computationally, we delay the process for one time
unit without any external interactions.
\[
\begin{array}{ll}
({\Next}C) & \m{proc}(c, t, \delay \semi P) \; \mapsto \; \m{proc}(c, t+1, P)
\end{array}
\]
There is a subtle point about forwarding: A process
$\m{proc}(c, t, c \leftarrow d)$ may be ready to forward a message
\emph{before} a client reaches time $t$ while in all other rules the
times must match exactly. We can avoid this mismatch by transforming
uses of forwarding $x \leftarrow y$ at type $\Next^n S$ where
$S \not= \Next(-)$ to $(\delay^n \semi x \leftarrow y)$. In this
discussion we have used the following notation which will be useful
later:
\[
\begin{array}{lcl@{\hspace{3em}}lcl}
\Next^0 A & = & A & \delay^0 \semi P & = & P \\
\Next^{n+1} A & = & \Next \Next^n A & \delay^{n+1} \semi P & = & \delay \semi \delay^n \semi P \\
\end{array}
\]

\subsection{Modeling a Cost Semantics}

Our system allows us to represent a variety of different abstract cost
models in a straightforward way.  We will mostly use two different
abstract cost models.  In the first, called $\mc{R}$, we assign unit
cost to every receive action while all other operations remain
cost-free. We may be interested in this since receiving a message is
the only blocking operation in the asynchronous semantics. A second
one, called $\mc{RS}$ and considered in Section~\ref{sec:examples},
assigns unit cost to both send and receive actions.

To capture $\mc{R}$ we take a source program and insert a delay
operation before the continuation of every receive.  We write this
delay as $\tick$ in order to remind the reader that it arises
systematically from the cost model and is never written by the
programmer.  In all other respects, $\tick$ is just a synonym for
$\delay$.

For example, the earlier copy process would become
\begin{sill}
$\m{bits} = {\oplus}\{\m{b0} : \m{bits}, \m{b1} : \m{bits}, \m{\$} : \one\}$ \\[1ex]
$y : \m{bits} \vdash \mi{copy} :: (x : \m{bits})$ \hspace{4em} 
\% \emph{No longer correct!}\\
$x \leftarrow \mi{copy} \leftarrow y = $ \\
\quad $\m{case}\; y\;$ \= $(\,\m{b0} \Rightarrow \tick \semi x.\m{b0} \semi
x \leftarrow \mi{copy} \leftarrow y$ \\
\> $\mid \m{b1} \Rightarrow \tick \semi x.\m{b1} \semi
x \leftarrow \mi{copy} \leftarrow y$ \\
\> $\mid \m{\$} \Rightarrow \tick \semi x.\m{\$} \semi \m{wait}\; y \semi \tick \semi \m{close}\; x\,)$
\end{sill}
As indicated in the comment, the type of $\mi{copy}$ is now no longer correct because
the bits that arrive along $y$ are delayed by one unit before they are
sent along $x$.  We can observe this concretely by starting to
type-check the first branch
\begin{sill}
$y : \m{bits} \vdash \mi{copy} :: (x : \m{bits})$ \\
$x \leftarrow \mi{copy} \leftarrow y = $ \\
\quad $\m{case}\; y\;$ \= $(\,\m{b0} \Rightarrow \null$ \= \hspace{6em} \=\% $y : \m{bits} \vdash x : \m{bits}$ \\
\>\> $\tick \semi \ldots)$
\end{sill}
We see that the delay $\tick$ does not type-check, because neither $x$
nor $y$ have a type of the form $\Next(-)$.  We need to redefine the
type $\m{bits}$ so that the continuation type after every label
is delayed by one, anticipating the time it takes to receive the label
$\m{b0}$, $\m{b1}$, or $\m{\$}$.  Similarly, we capture in the type of
$\mi{copy}$ that its \emph{latency} is one unit of time.
\begin{sill}
$\m{bits} = {\oplus}\{\m{b0} : \Next \m{bits}, \m{b1} : \Next \m{bits}, \m{\$} : \Next \one\}$ \\[1ex]
$y : \m{bits} \vdash \mi{copy} :: (x : \Next\m{bits})$
\end{sill}
With these declarations, we can now type-check the definition of
$\mi{copy}$.  We show the intermediate type of the used and
provided channels after each interaction.\pagebreak[4]
\begin{sill}
$x \leftarrow \mi{copy} \leftarrow y = $ \\
\quad $\m{case}\; y\;$ \= $(\,\m{b0} \Rightarrow \null$ \= \hspace{9em} \=\% $y : \Next\m{bits} \vdash x : \Next\m{bits}$ \\
\>\> $\tick \semi$ \>\% $y : \m{bits} \vdash x : \m{bits}$ \\
\>\> $x.\m{b0} \semi$ \>\% $y : \m{bits} \vdash x : \Next\m{bits}$ \\ 
\>\> $x \leftarrow \mi{copy} \leftarrow y$ \>\% \emph{well-typed by type of $\mi{copy}$} \\
\> $\mid \m{b1} \Rightarrow \null$ \>\>\% $y : \Next\m{bits} \vdash x : \Next\m{bits}$ \\
\>\> $\tick \semi$ \> \% $y : \m{bits} \vdash x : \m{bits}$ \\
\>\> $x.\m{b1} \semi$ \> \% $y : \m{bits} \vdash x : \Next\m{bits}$ \\
\>\> $x \leftarrow \mi{copy} \leftarrow y$ \\
\> $\mid \m{\$} \Rightarrow$ \>\>\% $y : \Next\one \vdash x : \Next\m{bits}$ \\
\>\> $\tick \semi$ \>\% $y : \one \vdash x : \m{bits}$ \\
\>\> $x.\m{\$} \semi$ \>\% $y : \one \vdash x : \Next\one$ \\
\>\> $\m{wait}\; y \semi$ \>\% $\cdot \vdash x : \Next\one$ \\
\>\> $\tick \semi$ \>\% $\cdot \vdash x : \one$ \\
\>\> $\m{close}\; x\,)$
\end{sill}

Armed with this experience, we now consider the increment process
$\mi{plus1}$.  Again, we expect the latency of the increment to be one
unit of time. Since we are interested in detailed type-checking, we
show the transformed program, with a delay $\tick$ after each receive.
\begin{sill}
$\m{bits} = {\oplus}\{\m{b0} : \Next \m{bits}, \m{b1} : \Next \m{bits}, \m{\$} : \Next \one\}$ \\[1ex]
$y : \m{bits} \vdash \mi{plus1} :: (x : \Next \m{bits})$ \\
$x \leftarrow \mi{plus1} \leftarrow y = $ \\
\quad $\m{case}\; y\;$ \= $(\,\m{b0} \Rightarrow $ \= $\tick \semi x.\m{b1} \semi
x \leftarrow y$ \hspace{7em}\=\% \emph{type error here!} \\
\> $\mid \m{b1} \Rightarrow \tick \semi x.\m{b0} \semi
x \leftarrow \mi{plus1} \leftarrow y$ \\
\> $\mid \m{\$} \Rightarrow \tick \semi x.\m{\$} \semi \m{wait}\; y \semi \tick \semi \m{close}\; x\,)$
\end{sill}
The branches for $\m{b1}$ and $\m{\$}$ type-check as before, but the
branch for $\m{b0}$ does not.  We make the types at the crucial point
explicit:
\begin{sill}
$x \leftarrow \mi{plus1} \leftarrow y = $ \\
\quad $\m{case}\; y\;$ \= $(\,\m{b0} \Rightarrow $ \= $\tick \semi x.\m{b1} \semi$
\hspace{5em}\=\% $y : \m{bits} \vdash x : \Next \m{bits}$ \\
\>\> $x \leftarrow y$ \>\% \emph{ill-typed, since $\m{bits} \not= \Next \m{bits}$} \\
\> $\mid \ldots\,)$
\end{sill}
The problem here is that identifying $x$ and $y$ removes the delay
mandated by the type of $\mi{plus1}$.  A solution is to call $\mi{copy}$
to reintroduce the latency of one time unit.
\begin{sill}
$y : \m{bits} \vdash \mi{plus1} :: (x : \Next \m{bits})$ \\
$x \leftarrow \mi{plus1} \leftarrow y = $ \\
\quad $\m{case}\; y\;$ \= $(\,\m{b0} \Rightarrow $ \= $\tick \semi x.\m{b1} \semi
x \leftarrow \mi{copy} \leftarrow y$ \\
\> $\mid \m{b1} \Rightarrow \tick \semi x.\m{b0} \semi
x \leftarrow \mi{plus1} \leftarrow y$ \\
\> $\mid \m{\$} \Rightarrow \tick \semi x.\m{\$} \semi \m{wait}\; y \semi \tick \semi \m{close}\; x\,)$
\end{sill}

In order to write $\mi{plus2}$ as a pipeline of two increments we need
to delay the second increment explicitly in the program and stipulate,
in the type, that there is a latency of two.
\begin{sill}
$y : \m{bits} \vdash \mi{plus2} :: (x : \Next\Next\m{bits})$ \\
$x \leftarrow \mi{plus2} \leftarrow y =$ \\
\quad\= $z \leftarrow \mi{plus1} \leftarrow y \semi$ \hspace{5em}\=\% $z : \Next\m{bits} \vdash x : \Next\Next\m{bits}$ \\
\> $\delay \semi$ \>\% $z : \m{bits} \vdash x : \Next\m{bits}$ \\
\> $x \leftarrow \mi{plus1} \leftarrow z$
\end{sill}
Programming with so many explicit delays is tedious, but fortunately
we can transform a source program without all these delay operations
(but explicitly temporal session types) automatically in two steps:
(1) we insert the delays mandated by the cost model (here: a $\tick$
after each receive), and (2) we perform \emph{time reconstruction} to
insert the additional delays so the result is temporally well-typed or
issue an error message if this is impossible (see
Section~\ref{sec:recon}).

\subsection{The Interpretation of a Configuration}

We reconsider the program to produce the number $6 = (110)_2$ under
the cost model from the previous section where each receive action
costs one unit of time.
There are no receive operations in this program, but time
reconstruction must insert a delay after each send in order to match
the delays mandated by the type $\m{bits}$.
\begin{sill}
$\m{bits} = {\oplus}\{\m{b0} : \Next\m{bits}, \m{b1} : \Next\m{bits}, \m{\$} : \Next\one\}$ \\[1ex]
$\cdot \vdash \mi{six} :: (x : \m{bits})$ \\
$x \leftarrow \mi{six} = x.\m{b0} \semi \delay \semi x.\m{b1} \semi \delay \semi x.\m{b1} \semi \delay \semi x.\m{\$} \semi \delay \semi \m{close}\; x$
\end{sill}
Executing $\m{proc}(c_0, 0, c_0 \leftarrow \mi{six})$ then leads to the
following configuration
\[
\begin{array}{l}
\m{msg}(c_4, 4, \m{close}\; c_4), \\
\m{msg}(c_3, 3, c_3.\m{\$} \semi c_3 \leftarrow c_4), \\
\m{msg}(c_2, 2, c_2.\m{b1} \semi c_2 \leftarrow c_3), \\
\m{msg}(c_1, 1, c_1.\m{b1} \semi c_1 \leftarrow c_2), \\
\m{msg}(c_0, 0, c_0.\m{b0} \semi c_0 \leftarrow c_1) \\
\end{array}
\]
These messages are at increasing times, which means any client of
$c_0$ will have to immediately (at time 0) receive $\m{b0}$, then (at
time 1) $\m{b1}$, then (at time 2) $\m{b1}$, etc.  In other words, the
time stamps on messages predict \emph{exactly} when the message will
be received.  Of course, if there is a client in parallel we may never
reach this state because, for example, the first $\m{b0}$ message
along channel $c_0$ may be received before the continuation of the
sender produces the message $\m{b1}$.  So different configurations may
be reached depending on the \emph{scheduler} for the concurrent
processes.
It is also possible to give a time-synchronous semantics in which all
processes proceed \emph{in parallel} from time 0 to time 1, then from
time 1 to time 2, etc. 

\section{The Temporal Modalities Always ($\Box A$) and Eventually ($\Dia A$)}
\label{sec:temporal}


The strength and also the weakness of the system so far is that its
timing is very precise. Now consider a process $\mi{compress}$ that
combines runs of consecutive 1's to a single 1.  For example,
compressing $11011100$ should yield $10100$.  First, in the cost-free
setting we might write
\begin{sill}
$\m{bits} = {\oplus}\{\m{b0} : \m{bits}, \m{b1} : \m{bits}, \m{\$} : \one\}$ \\[1ex]
$y : \m{bits} \vdash \mi{compress} :: (x : \m{bits})$ \\
$y : \m{bits} \vdash \mi{skip1s} :: (x : \m{bits})$ \\[1ex]
$x \leftarrow \mi{compress} \leftarrow y =$ \\
\quad $\m{case}\; y\;$ \= $(\,\m{b0} \Rightarrow x.\m{b0} \semi x \leftarrow \mi{compress} \leftarrow y$ \\
\>$\mid \m{b1} \Rightarrow x.\m{b1} \semi x \leftarrow \mi{skip1s} \leftarrow y$ \\
\>$\mid \m{\$} \Rightarrow x.\m{\$} \semi \m{wait}\; y \semi \m{close}\; x\,)$ \\[1ex]
$x \leftarrow \mi{skip1s} \leftarrow y =$ \\
\quad $\m{case}\; y\;$ \= $(\,\m{b0} \Rightarrow x.\m{b0} \semi x \leftarrow \mi{compress} \leftarrow y$ \\
\>$\mid \m{b1} \Rightarrow x \leftarrow \mi{skip1s} \leftarrow y$ \\
\>$\mid \m{\$} \Rightarrow x.\m{\$} \semi \m{wait}\; y \semi \m{close}\; x\,)$
\end{sill}
The problem is that if we adopt the cost model $\mc{R}$ where every
receive takes one unit of time, then this program cannot be typed.
Actually worse: there is no way to insert next-time modalities into
the type and additional delays into the program so that the result is
well-typed.  This is because if the input stream is unknown we cannot
predict how long a run of 1's will be, but the length of such a run
will determine the delay between sending a bit 1 and the following bit
0.

The best we can say is that after a bit 1 we will \emph{eventually}
send either a bit 0 or the end-of-stream token $\m{\$}$.  This is the
purpose of the type $\Dia A$.  We capture this timing in the type
$\m{sbits}$ (for \emph{slow bits}).
\begin{sill}
$\m{bits} = {\oplus}\{\m{b0} : \Next \m{bits}, \m{b1} : \Next \m{bits}, \m{\$} : \Next \one\}$ \\
$\m{sbits} = {\oplus}\{\m{b0} : \Next \m{sbits}, \m{b1} : \Next \Dia \m{sbits}, \m{\$} : \Next \one\}$ \\[1ex]
$y : \m{bits} \vdash \mi{compress} :: (x : \Next \m{sbits})$ \\
$y : \m{bits} \vdash \mi{skip1s} :: (x : \Next \Dia \m{sbits})$
\end{sill}
In the next section we introduce the process constructs and typing
rules so we can revise our $\mi{compress}$ and $\mi{skip1s}$ programs so
they have the right temporal semantics.

\subsection{Eventually $A$}
\label{sec:eventually}

A process providing $\Dia A$ promises only that it will eventually
provide $A$. There is a somewhat subtle point here: since not every
action may require time and because we do not check termination
separately, $x : \Dia A$ expresses only that \emph{if the process
  providing $x$ terminates} it will eventually provide $A$.  Thus, it
expresses nondeterminism regarding the (abstract) \emph{time} at which
$A$ is provided, rather than a strict liveness property.  Therefore,
$\Dia A$ is somewhat weaker than one might be used to from
LTL~\citep{Pnueli77}. When restricted to a purely logical fragment,
without unrestricted recursion, the usual meaning is fully restored so
we feel our terminology is justified.  Imposing termination, for
example along the lines of \citet{Fortier13csl} or
\citet{Toninho14tgc} is an interesting item for future work but not
necessary for our present purposes.


When a process offering $c : \Dia A$ is ready, it will send a $\noww$
message along $c$ and then continue at type $A$. Conversely, the
client of $c : \Dia A$ will have to be ready and waiting for the
$\noww$ message to arrive along $c$ and then continue at type $A$.  We
use $(\when{c} \semi Q)$ for the corresponding client. These explicit
constructs are a conceptual device and may not need to be part of an
implementation. They also make type-checking processes entirely
syntax-directed and trivially decidable.

The typing rules for $\noww$ and $\whenn$ are somewhat subtle.
\[
\infer[{\Dia}R]
  {\Omega \vdash (\now{x} \semi P) :: (x : \Dia A)}
  {\Omega \vdash P :: (x : A)}
\hspace{3em}
\infer[{\Dia}L]
  {\Omega, x{:}\Dia A \vdash (\when{x} \semi Q) :: (z : C)}
  {\Next^* \Box \Omega' = \Omega
    & \Omega, x{:} A \vdash Q :: (z : C)
    & C = \Next^* \Dia C'}
\]
The ${\Dia}R$ rule just states that, without constraints, we can at
any time decide to communicate along $x : \Dia A$ and then continue
the session at type $A$.  The ${\Dia}L$ rule expresses that the
process must be ready to receive a $\noww$ message along $x : \Dia A$,
but there are two further constraints.  Because the process
$(\when{x} \semi Q)$ may need to wait an indefinite period of time,
the rule must make sure that communication along $z$ and any channel
in $\Omega$ can also be postponed an indefinite period of time.  We
write $C = \Next^* \Dia C'$ to require that $C$ may be delayed a fixed
finite number of time steps and then must be allowed to communicate at
an arbitrary time in the future.  Similarly, for every channel $y : B$
in $\Omega$, $B$ must have the form $\Next^* \Box B$, where $\Box$
(as the dual of $\Dia$) is introduced in Section~\ref{sec:box}.

In the operational semantics, the central restriction is that
$\whenn$ is ready \emph{before} the $\noww$ message arrives so
that the continuation can proceed immediately as promised by the type.
\[
\begin{array}{lll}
({\Dia}S) & \m{proc}(c, t, \now{c} \semi P) \;\; \mapsto \;\; \m{proc}(c', t, [c'/c]P), \m{msg}(c, t, \now{c} \semi c \leftarrow c') & \mbox{($c'$ fresh)} \\
({\Dia}C) & \m{msg}(c, t, \now{c} \semi c \leftarrow c'), \m{proc}(d, s, \when{c} \semi Q)
\;\; \mapsto \;\; \m{proc}(d, t, [c'/c]Q) & (t \geq s)
\end{array}
\]
We are now almost ready to rewrite the $\mi{compress}$ process in our
cost model $\mc{R}$.  First, we insert $\tick$ before all the actions
that must be delayed according to our cost model. Then we insert
appropriate additional $\delay$, $\whenn$, and $\noww$ actions.  While
$\mi{compress}$ turns out to be straightforward, $\mi{skip1s}$ creates
a difficulty after it receives a $\m{b1}$:
\begin{sill}
$\m{bits} = {\oplus}\{\m{b0} : \Next \m{bits}, \m{b1} : \Next \m{bits}, \m{\$} : \Next \one\}$ \\
$\m{sbits} = {\oplus}\{\m{b0} : \Next \m{sbits}, \m{b1} : \Next \Dia \m{sbits}, \m{\$} : \Next \one\}$ \\[1ex]
$y : \m{bits} \vdash \mi{compress} :: (x : \Next \m{sbits})$ \\
$y : \m{bits} \vdash \mi{skip1s} :: (x : \Next \Dia \m{sbits})$ \\[1ex]
$x \leftarrow \mi{compress} \leftarrow y =$ \\
\quad $\m{case}\; y\;$ \= $(\,\m{b0} \Rightarrow \tick \semi x.\m{b0} \semi x \leftarrow \mi{compress} \leftarrow y$ \\
\>$\mid \m{b1} \Rightarrow \tick \semi x.\m{b1} \semi x \leftarrow \mi{skip1s} \leftarrow y$ \\
\>$\mid \m{\$} \Rightarrow \tick \semi x.\m{\$} \semi \m{wait}\; y \semi \tick \semi \m{close}\; x\,)$ \\[1ex]
$x \leftarrow \mi{skip1s} \leftarrow y =$ \\
\quad $\m{case}\; y\;$ \= $(\,\m{b0} \Rightarrow \tick \semi \now{x} \semi x.\m{b0} \semi x \leftarrow \mi{compress} \leftarrow y$ \\
\>$\mid \m{b1} \Rightarrow $ \= $\tick \semi $ \hspace{11em}\=\% $y : \m{bits} \vdash x : \Dia \m{sbits}$ \\
\>\> $x' \leftarrow \mi{skip1s} \leftarrow y \semi$ \>\% $x' : \Next \Dia \m{sbits} \vdash x : \Dia \m{sbits}$ \\
\>\> $x \leftarrow \mi{idle} \leftarrow x'$ \>\% with $x':\Next \Dia \m{sbits} \vdash \mi{idle} :: (x : \Dia \m{sbits})$ \\
\>$\mid \m{\$} \Rightarrow \tick \semi \now{x} \semi x.\m{\$} \semi \m{wait}\; y \semi \tick \semi \m{close}\; x\,)$
\end{sill}
At the point where we would like to call $\mi{skip1s}$ recursively, we have
\begin{sill}
$y : \m{bits} \vdash x : \Dia \m{sbits}$ \\
but \quad $y : \m{bits} \vdash \mi{skip1s} :: (x : \Next \Dia \m{sbits})$
\end{sill}
which prevents a tail call since
$\Next \Dia \m{sbits} \not= \Dia \m{sbits}$.  Instead we call
$\mi{skip1s}$ to obtain a new channel $x'$ and then use another
process called $\mi{idle}$ to go from $x' : \Next \Dia \m{sbits}$ to
$x : \Dia \m{sbits}$.  Intuitively, it should be possible to implement
such an idling process: $x : \Dia \m{sbits}$ expresses \emph{at some
  time in the future, including possibly right now} while
$x' : \Next \Dia \m{sbits}$ says \emph{at some time in the future, but
  not right now}.

To type the idling process, we need to generalize the ${\Next}LR$ rule
to account for the interactions of $\Next A$ with $\Box A$ and
$\Dia A$.  After all, they speak about the same underlying model of
time.

\subsection{Interactions of $\Next A$ and $\Dia A$}

Recall the left/right rule for $\Next$:
\[
\infer[{\Next}{LR}]
  {\Next \Omega \vdash (\delay \semi P) :: (x : \Next A)}
  {\Omega \vdash P :: (x : A)}
\]
If the succedent were $x : \Dia A$ instead of $x : \Next A$, we should
still be able to delay since we can freely choose when to interact
along $x$.  We could capture this in the following rule (superseded later
by a more general form of ${\Next}LR$):
\[
\infer[{\Next}{\Dia}]
  {\Next \Omega \vdash (\delay \semi P) :: (x : \Dia A)}
  {\Omega \vdash P :: (x : \Dia A)}
\]
We keep $\Dia A$ as the type of $x$ since we retain the full
flexibility of using $x$ at any time in the future after the initial
delay. We will generalize the rule once more in the
next section to account for interactions with $\Box A$.

With this, we can define and type the idling process
parametrically over $A$:
\begin{sill}
$x':\Next \Dia A \vdash \mi{idle} :: (x : \Dia A)$ \\
$x \leftarrow \mi{idle} \leftarrow x' = \delay \semi x \leftarrow x'$
\end{sill}
This turns out to be an example of subtyping (see
Section~\ref{sec:subtyping}), which means that the programmer actually
will not have to explicitly define or even reference an idling
process.
The programmer simply writes the original $\mi{skip1s}$
process (without referencing the $\mi{idle}$ process) and our
subtyping algorithm will use the appropriate rule
to typecheck it successfully.

\subsection{Always $A$}
\label{sec:box}

We now turn our attention to the last temporal modality, $\Box A$,
which is dual to $\Dia A$.  If a process $P$ provides $x : \Box A$
it means it is ready to receive a $\noww$ message along $x$ at any
point in the future.  In analogy with the typing rules for $\Dia A$,
but flipped to the other side of the sequent, we obtain
\[
\infer[{\Box}R]
  {\Omega \vdash (\when{x} \semi P) :: (x : \Box A)}
  {\Next^* \Box \Omega' = \Omega
    & \Omega \vdash P :: (x : A)}
\hspace{3em}
\infer[{\Box}L]
  {\Omega, x{:}\Box A \vdash (\now{x} \semi Q) :: (z : C)}
  {\Omega, x{:}A \vdash Q :: (z : C)}
\]
The operational rules just reverse the role of provider
and client from the rules for $\Dia A$.
\[
\begin{array}{lll}
({\Box}S) &
\m{proc}(d, t, \now{c} \semi Q) \;\; \mapsto \;\; 
\m{msg}(c', t, \now{c} \semi c' \leftarrow c), \m{proc}(d, t, [c'/c]Q)
& \mbox{($c'$ fresh)} \\
({\Box}C) &
\m{proc}(c, s, \when{c} \semi P), \m{msg}(c', t, \now{c} \semi c' \leftarrow c)
\;\; \mapsto \;\;
\m{proc}(c', t, [c'/c]P) & (s \leq t)
\end{array}
\]

As an example for the use of $\Box A$, and also to introduce a new
kind of example, we specify and implement a counter process that can
receive $\m{inc}$ and $\m{val}$ messages.  When receiving an $\m{inc}$
it will increment its internally maintained counter, when receiving
$\m{val}$ it will produce a finite bit stream representing the current
value of the counter.  In the cost-free setting we have the type
\begin{sill}
$\m{bits} = {\oplus}\{\m{b0} : \m{bits}, \m{b1} : \m{bits}, \m{\$} : \one\}$ \\
$\m{ctr} = {\with}\{\m{inc} : \m{ctr}, \m{val} : \m{bits}\}$
\end{sill}
A counter is implemented by a chain of processes, each holding
one bit (either $\mi{bit0}$ or $\mi{bit1}$) or signaling the end
of the chain ($\mi{empty}$). 
For this purpose we implement three processes:
\begin{sill}
$d : \m{ctr} \vdash \mi{bit0} :: (c : \m{ctr})$ \\
$d : \m{ctr} \vdash \mi{bit1} :: (c : \m{ctr})$ \\
$\cdot \vdash \mi{empty} :: (c : \m{ctr})$ \\[1ex]
$c \leftarrow \mi{bit0} \leftarrow d =$ \\
\quad \= $\m{case}\; c\;$ \= $(\,\m{inc} \Rightarrow c \leftarrow \mi{bit1} \leftarrow d$ \hspace{4em} \= \% increment by continuing as $\mi{bit1}$ \\
\>\> $\mid \m{val} \Rightarrow c.\m{b0} \semi d.\m{val} \semi c \leftarrow d\,)$ \> \% send $\m{b0}$ on $c$, send $\m{val}$ on $d$, identify $c$ and $d$ \\[1ex]
$c \leftarrow \mi{bit1} \leftarrow d =$ \\
\> $\m{case}\; c\;$ \> $(\,\m{inc} \Rightarrow d.\m{inc} \semi c \leftarrow \mi{bit0} \leftarrow d$ \> \% send $\m{inc}$ (carry) on $d$, continue as $\mi{bit1}$ \\
\>\> $\mid \m{val} \Rightarrow c.\m{b1} \semi d.\m{val} \semi c \leftarrow d\,)$ \> \% send $\m{b1}$ on $c$, send $\m{val}$ on $d$, identify $c$ and $d$ \\[1ex]
$c \leftarrow \mi{empty} =$ \\
\> $\m{case}\; c\;$ \> $(\,\m{inc} \Rightarrow$ \= $e \leftarrow \mi{empty} \semi$ \hspace{4.5em} \= \% spawn a new $\mi{empty}$ process with channel $e$ \\
\>\>\> $c \leftarrow \mi{bit1} \leftarrow e$ \> \% continue as $\mi{bit1}$ \\
\>\> $\mid \m{val} \Rightarrow c.\m{\$} \semi \m{close}\; c\,)$ \>\> \% send $\$$ on $c$ and close $c$
\end{sill}
Using our standard cost model $\mc{R}$ we notice a problem: the \emph{carry
  bit} (the $d.\m{inc}$ message sent in the $\mi{bit1}$ process) is
sent only on every other increment received because $\mi{bit0}$
continues as $\mi{bit1}$ \emph{without} a carry, and $\mi{bit1}$
continues as $\mi{bit0}$ \emph{with} a carry.  So it will actually take
$2^k$ increments received at the lowest bit of the counter (which
represents the interface to the client) before an increment reaches
the $k$th process in the chain.  This is not a constant number,
so we cannot characterize the behavior exactly using only
the next time modality.  Instead, we say, from a certain point
on, a counter is always ready to receive either an $\m{inc}$
or $\m{val}$ message.
\begin{sill}
$\m{bits} = {\oplus}\{\m{b0} : \Next \m{bits}, \m{b1} : \Next \m{bits}, \m{\$} : \Next \one\}$ \\
$\m{ctr} = \Box {\with}\{\m{inc} : \Next \m{ctr}, \m{val} : \Next \m{bits}\}$
\end{sill}
In the program, we have ticks mandated by our cost model and some
additional $\delay$, $\whenn$, and $\noww$ actions to satisfy the
stated types.  The two marked lines may look incorrect, but are valid
based on the generalization of the ${\Next}LR$ rule in
Section~\ref{sec:interact}.
\begin{sill}
$d : \Next \m{ctr} \vdash \mi{bit0} :: (c : \m{ctr})$ \\
$d : \m{ctr} \vdash \mi{bit1} :: (c : \m{ctr})$ \\
$\cdot \vdash \mi{empty} :: (c : \m{ctr})$ \\[1ex]
$c \leftarrow \mi{bit0} \leftarrow d =$ \\
\quad \= $\m{case}\; c\;$ \= $(\,\m{inc} \Rightarrow $ \= $\tick \semi$ \hspace{7em}\=\% $d : \m{ctr} \vdash c : \m{ctr}$ \kill
\> $\when{c} \semi$ \>\>\>\% $d : \Next \m{ctr} \vdash c : {\with}\{\ldots\}$ \\
\> $\m{case}\; c\;$ \> $(\,\m{inc} \Rightarrow $ \> $\tick \semi$ \>\% $d : \m{ctr} \vdash c : \m{ctr}$ \\
\>\>\> $c \leftarrow \mi{bit1} \leftarrow d$ \\
\>\> $\mid \m{val} \Rightarrow $ \> $\tick \semi$ \>\% $d : \m{ctr} \vdash c : \m{bits}$ \\
\>\>\> $c.\m{b0} \semi$ \>\% $d : \m{ctr} \vdash c : \Next \m{bits}$ \\
\>\>\> $\now{d} \semi d.\m{val} \semi$ \>\% $d : \Next\m{bits} \vdash c : \Next\m{bits}$ \\
\>\>\> $c \leftarrow d\,)$ \\[1ex]
$c \leftarrow \mi{bit1} \leftarrow d =$ \\
\> $\when{c} \semi$ \>\>\>\% $d : \m{ctr} \vdash c : {\with}\{\ldots\}$ \\
\> $\m{case}\; c\;$ \> $(\,\m{inc} \Rightarrow \tick \semi$ \>\>\% $d : \m{ctr} \vdash c : \m{ctr}$
\hspace{2em}\= \mbox {\it (see Section~\ref{sec:interact})}\\
\>\>\> $\now{d} \semi d.\m{inc} \semi$ \>\% $d : \Next\m{ctr} \vdash c : \m{ctr}$ \\
\>\>\> $c \leftarrow \mi{bit0} \leftarrow d$ \\
\>\> $\mid \m{val} \Rightarrow \tick \semi$ \>\>\% $d : \m{ctr} \vdash c : \m{bit}$ \> \mbox  {\it (see Section~\ref{sec:interact})} \\
\>\>\> $c.\m{b1} \semi$ \>\% $d : \m{ctr} \vdash c : \Next \m{bits}$ \\
\>\>\> $\now{d} \semi d.\m{val} \semi$ \>\% $d : \Next\m{bits} \vdash c : \Next\m{bits}$ \\
\>\>\> $c \leftarrow d\,)$ \\[1ex]
$c \leftarrow \mi{empty} =$ \\
\> $\when{c} \semi$ \>\>\>\% $\cdot \vdash c : {\with}\{\ldots\}$ \\
\> $\m{case}\; c\;$ \> $(\,\m{inc} \Rightarrow$ \> $\tick \semi$ \>\% $\cdot \vdash c : \m{ctr}$ \\
\>\>\> $e \leftarrow \mi{empty} \semi$ \>\% $e : \m{ctr} \vdash c : \m{ctr}$ \\
\>\>\> $c \leftarrow \mi{bit1} \leftarrow e$ \\
\>\> $\mid \m{val} \Rightarrow$ \> $\tick \semi c.\m{\$} \semi$ \>\% $\cdot \vdash c : \Next\one$ \\
\>\>\> $\delay \semi \m{close}\; c\,)$
\end{sill}

\subsection{Interactions Between Temporal Modalities}
\label{sec:interact}

Just as $\Next A$ and $\Dia A$ interacted in the rules since their
semantics is based on the same underlying notion of time, so do
$\Next A$ and $\Box A$.  If we execute a delay, we can allow any
channel of type $\Box A$ that we use and leave its type unchanged
because we are not obligated to communicate along it at any
particular time.  It is a little awkward to formulate this because
among the channels used there may be some of type $\Next B$
and some of type $\Box B$.
\[
\infer[{\Next}]
  {\Box \Omega, \Next \Omega' \vdash (\delay \semi P) :: (x : \Next A)}
  {\Box \Omega, \Omega' \vdash P :: (x : A)}
\]
In the example of $\mi{bit1}$ at the end of the previous section,
we have already seen two lines where this generalization was
crucial, observing that $\m{ctr} = \Box {\with}\{\ldots\}$.

But even this rule does not cover all possibilities, because the
channel $x$ could be of type $\Dia A$.  We introduce a new notation,
writing $[A]_L^{-1}$ and $[A]_R^{-1}$ on types and then extend it to
contexts.  Depending on one's point of view, this can be seen as
stepping forward or backward by one unit of time.
\[
\begin{array}{lcl@{\hspace{2em}}lcl}
[\Next A]_L^{-1} & = & A & [\Next A]_R^{-1} & = & A \\\relax
[\Box A]_L^{-1} & = & \Box A & [\Box A]_R^{-1} & = & \mbox{\it undefined} \\\relax
[\Dia A]_L^{-1} & = & \mbox{\it undefined} & [\Dia A]_R^{-1} & = & \Dia A \\\relax
[S]_L^{-1} & = & \mbox{\it undefined} & [S]_R^{-1} & = & \mbox{\it undefined} \\[1ex]
[x:A]_L^{-1} & = & x:[A]_L^{-1} & [x:A]_R^{-1} & = & x:[A]_R^{-1} \\\relax
[\cdot]_L^{-1} & = & \cdot \\\relax
[\Omega, \Omega']_L^{-1} & = & [\Omega]_L^{-1}, [\Omega']_L^{-1} 
\end{array}
\]
Here, $S$ stands for any basic session type constructor as in
Figure~\ref{fig:basic-types}.  We use this notation in the general
rule ${\Next}{LR}$ which can be found in Figure~\ref{fig:temporal}
together with the final set of rules for $\Box A$ and $\Dia A$.  In
conjunction with the rules in Figure~\ref{fig:basic-typing} this
completes the system of temporal session types where all temporal
actions are explicit.  The rule ${\Next}{LR}$ only applies if both
$[\Omega]_L^{-1}$ and $[x:A]_R^{-1}$ are defined.

We call a type $A$ \emph{patient} if it does not force communication
along a channel $x : A$ at any particular point in time.  Because the
direction of communication is reversed between the two sides of a
sequent, a type $A$ is patient if it has the form $\Next^* \Box A'$ if it is
among the antecedents, and $\Next^* \Dia A'$ if it is in the
succedent.  We write $A\; \m{delayed}^\Box$ and $A\; \m{delayed}^\Dia$
and extend it to contexts $\Omega\; \m{delayed}^\Box$ if for every
declaration $(x : A) \in \Omega$, we have $A\; \m{delayed}^\Box$.

\begin{figure}
  \centering
\[
\begin{array}{c}
\infer[{\Next}{LR}]
  {\Omega \vdash (\delay \semi P) :: (x : A)}
  {[\Omega]_L^{-1} \vdash P :: [x : A]_R^{-1}}
\hspace{3em}
\infer[]
  {\Next^* \Box A\; \m{delayed}^\Box}
  {} 
\hspace{3em}
\infer[]
  {\Next^* \Dia A\; \m{delayed}^\Dia}
  {} 
\\[1em]
\infer[{\Dia}R]
  {\Omega \vdash (\now{x} \semi P) :: (x : \Dia A)}
  {\Omega \vdash P :: (x : A)}
\hspace{3em}
\infer[{\Dia}L]
  {\Omega, x{:}\Dia A \vdash (\when{x} \semi Q) :: (z : C)}
  {\Omega\; \m{delayed}^\Box
    & \Omega, x{:} A \vdash Q :: (z : C)
    & C\; \m{delayed}^\Dia}
\\[1em]
\infer[{\Box}R]
  {\Omega \vdash (\when{x} \semi P) :: (x : \Box A)}
  {\Omega\; \m{delayed}^\Box
    & \Omega \vdash P :: (x : A)}
\hspace{3em}
\infer[{\Box}L]
  {\Omega, x{:}\Box A \vdash (\now{x} \semi Q) :: (z : C)}
  {\Omega, x{:}A \vdash Q :: (z : C)}
\end{array}
\]
  \vspace{-1em}
  \caption{Explicit Temporal Typing Rules}
  \vspace{-1em}
  \label{fig:temporal}
\end{figure}

\section{Preservation and Progress}
\label{sec:metatheory}


The main theorems that exhibit the deep connection between our type
system and the timed operational semantics are the usual \emph{type
  preservation} and \emph{progress}, sometimes called \emph{session
  fidelity} and \emph{deadlock freedom}, respectively.  Compared to
other recent treatments of linear session
types~\citep{Pfenning15fossacs,Balzer17icfp}, new challenges are
presented by abstract time and the temporal modalities.

\subsection{Configuration Typing}
\label{sec:config}

A key question is how we type configurations $\CC$.  Configurations
consist of multiple processes and messages, so they both \emph{use}
and \emph{provide} a collection of channels. And even though we
treat a configuration as a multiset, typing imposes a partial order on
the processes and messages where a provider of a channel appears to
the left of its client.
\[
\begin{array}{llcl}
\m{Configuration} & \CC & ::= & \cdot \mid \CC\; \CC' \mid \m{proc}(c, t, P)
\mid \m{msg}(c, t, M)
\end{array}
\]
We say $\m{proc}(c, t, P)$ and $\m{msg}(c, t, M)$ \emph{provide} $c$.
We stipulate that no two distinct processes or messages in a
configuration provide the same channel $c$.  Also recall that messages
$M$ are simply processes of a particular form and are typed as such.
We can read off the possible messages (of which there is one for each
type constructor) from the operational semantics.  They are summarized here
for completeness.
\[
\begin{array}{lcl}
M & ::= & (c.k \semi c \leftarrow c') \mid (c.k \semi c' \leftarrow c)
\mid \m{close}\; c \mid (\m{send}\; c\; d \semi c' \leftarrow c)
\mid (\m{send}\; c\; d \semi c \leftarrow c')
\end{array}
\]

The typing judgment has the form $\Omega' \vDash \CC :: \Omega$
meaning that if composed with a configuration that provides $\Omega'$,
the result will provide $\Omega$.
\[
\begin{array}{c}
\infer[\m{empty}]
  {\Omega \vDash (\cdot) :: \Omega}
  {\mathstrut}
\hspace{3em}
\infer[\m{compose}]
  {\Omega_0 \vDash (\CC_1\; \CC_2) :: \Omega_2}
  {\Omega_0 \vDash \CC_1 :: \Omega_1
  & \Omega_1 \vDash \CC_2 :: \Omega_2}
\end{array}
\]
To type processes and messages, we begin by
considering \emph{preservation}: we would like to achieve that if
$\Omega' \vDash \CC :: \Omega$ and $\CC \mapsto \CC'$ then still
$\Omega' \vDash \CC' :: \Omega$. Without the temporal modalities, this
is guaranteed by the design of the sequent calculus: the right and
left rules match just so that cut reduction (which is the basis for
reduction in the operational semantics) leads to a well-typed
deduction. The key here is what happens with time.  Consider
the special case
\[
\begin{array}{c}
\infer[{\Next}{LR}]
  {\Next \Omega \vdash (\delay \semi P) :: (x : \Next A)}
  {\Omega \vdash P :: A}
\qquad
\infer[\m{proc}(c, t, \delay \semi P) \;\mapsto\; \m{proc}(c, t+1, P)]
{\mathstrut}
{\mathstrut}
\end{array}
\]
Note that, inevitably, the type of the channel $c$ changes in the
transition, from $c : \Next A$ to $c : A$ and similarly for all
channels used by $P$. So if in $\m{proc}(c, t, Q)$ we were to use the
type of $Q$ as the type of the semantic process object, preservation
would fail.  But while the type changes from $\Next A$ to $A$,
\emph{time} also advances from $t$ to $t+1$. This suggests the
following rule should keep the configuration type invariant:
\[
\infer[\m{proc}^{\Next}]
  {\Next^t \Omega \vDash \m{proc}(c,t,P) :: (c : \Next^t A)}
  {\Omega \vdash P :: (c : A)}
\]
When we transition from $\delay \semi P$ to $P$ we strip one $\Next$
modality from $\Omega$ and $A$, but because we also advance time from
$t$ to $t+1$, the $\Next$ modality is restored,
keeping the interface type invariant.

When we also consider types $\Box A$ and $\Dia A$ the situation is a
little less straightforward because of their interaction with $\Next$,
as we have already encountered in Section~\ref{sec:interact}.  We
reuse the idea of the solution, allowing the subtraction of time from
a type, possibly stopping when we meet a $\Box$ or $\Dia$.
\[
\begin{array}{lcllcl}
[A]_L^{-0} & = & A & [A]_R^{-0} & = & A \\\relax
[A]_L^{-(t+1)} & = & [[A]_L^{-t}]_L^{-1} & [A]_R^{-(t+1)} & = & [[A]_R^{-t}]_R^{-1}
\end{array}
\]
This is extended to channel declarations in the obvious way.
Additionally, the imprecision of $\Box A$ and $\Dia A$ may create
temporal gaps in the configuration that need to be bridged by a weak
form of subtyping $A \wsubt B$ (not to be confused with the much
stronger form $A \leq B$ in Section~\ref{sec:subtyping}),
\[
\begin{array}{c}
\infer[\Box_{\m{weak}}]
  {\Next^m \Box A \wsubt \Next^n \Box A}
  {m \leq n}
\hspace{3em}
\infer[\Dia_{\m{weak}}]
  {\Next^m \Dia A \wsubt \Next^n \Dia A}
  {m \geq n}
\hspace{3em}
\infer[\m{refl}]
  {A \wsubt A}
  {\mathstrut}
\end{array}
\]
This relation is specified to be reflexive and clearly transitive.
We extend it to contexts $\Omega$ in the obvious manner.  In our final
rules we also account for some channels that are not used by $P$ or
$M$ but just passed through.
\[
\infer[\m{proc}]
  {\Omega_0, \Omega' \vDash \m{proc}(c,t,P) :: (\Omega_0, c : A')}
  {\Omega' \wsubt \Omega & [\Omega]_L^{-t} \vdash P :: [c : A]_R^{-t}
    & A \wsubt A'}
\hspace{1em}
\infer[\m{msg}]
  {\Omega_0, \Omega' \vDash \m{msg}(c,t,M) :: (\Omega_0, c : A')}
  {\Omega' \wsubt \Omega & [\Omega]_L^{-t} \vdash M :: [c : A]_R^{-t}
    & A \wsubt A'}
\]

\subsection{Type Preservation}
\label{sec:preservation}

With the four rules for typing configurations
($\m{empty}$, $\m{compose}$, $\m{proc}$ and $\m{msg}$),
type preservation is
relatively straightforward.  We need some standard lemmas about
being able to split a configuration and be able to move a provider
(whether process or message) to the right in a typing derivation until
it rests right next to its client.  Regarding time shifts, we need the
following properties.
\begin{lemma}[Time Shift]
\label{lm:shift}
\mbox{}
\begin{enumerate}
\item[(i)] If $[A]_L^{-t} = [B]_R^{-t}$ and both are defined
  then $A = B$.
\item[(ii)] $[[A]_L^{-t}]_L^{-s} = [A]_L^{-(t+s)}$ and if either side is
  defined, the other is as well.
\item[(iii)] $[[A]_R^{-t}]_R^{-s} = [A]_R^{-(t+s)}$ and if either side
  is defined, the other is as well.
\end{enumerate}
\end{lemma}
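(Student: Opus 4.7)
The plan is to prove the three parts by induction on the time indices, exploiting the fact that the one-step operators $[\cdot]_L^{-1}$ and $[\cdot]_R^{-1}$ are deterministic partial functions defined by case analysis on the outermost type constructor. I would prove (ii) and (iii) first, since they are pure compositionality statements for an iterated partial function, and then use them to massage the expressions that appear in (i).

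For parts (ii) and (iii), I would use induction on $s$. The base case $s=0$ is immediate from $[A]_L^{-0} = A$. For the step, unfolding $[A]_L^{-(t+s+1)} = [[A]_L^{-(t+s)}]_L^{-1}$ by definition and applying the inductive hypothesis reduces it to $[[[A]_L^{-t}]_L^{-s}]_L^{-1} = [[A]_L^{-t}]_L^{-(s+1)}$, which is again the definition. The definedness direction is handled in the same induction: since the one-step operator is a partial function, an $(n{+}1)$-fold iterate is defined exactly when the $n$-fold iterate is defined and a one-step shift on that result is defined, and this characterization is associative. Part (iii) is identical with $R$ in place of $L$.

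For part (i) I would induct on $t$. The base case $t=0$ gives $A = B$ immediately. For the inductive step, using (ii) and (iii) I rewrite the hypothesis as $[[A]_L^{-t}]_L^{-1} = [[B]_R^{-t}]_R^{-1}$, both defined, and case-analyze on the outermost constructor of $[A]_L^{-t}$ and $[B]_R^{-t}$. The only outer constructors on which $[\cdot]_L^{-1}$ is defined are $\Next$ and $\Box$, and the only ones on which $[\cdot]_R^{-1}$ is defined are $\Next$ and $\Dia$; for the two results to be equal under these four combinations one must match the outer head of the common value against what each side can produce, and conclude that both sides must have peeled a $\Next$. This gives $[A]_L^{-t} = \Next A'$ and $[B]_R^{-t} = \Next A'$ with $A' = A'$, so $[A]_L^{-t} = [B]_R^{-t}$, and the inductive hypothesis at index $t$ (together with (ii) and (iii)) yields $A = B$.

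The main obstacle I expect is the case analysis in the inductive step of (i): one must carefully rule out spurious matches coming from the fact that $[\Box C]_L^{-1} = \Box C$ preserves its argument while $[\Next \Box C]_R^{-1} = \Box C$ peels a $\Next$, which superficially looks like it could allow $A$ and $B$ to differ by a leading $\Next$. Tracking how the head of the common result constrains both $A$ and $B$ through the entire sequence of $t$ shifts, rather than just the last one, is where the argument needs to be written out carefully; the rest is routine unfolding.
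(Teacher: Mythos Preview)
Your approach to parts (ii) and (iii) is correct and routine: the iterated shift is defined by outer recursion on the time index, so induction on $s$ unfolds both sides in lockstep, and definedness follows by the same induction.

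For part (i), however, the obstacle you flag is not a matter of careful bookkeeping---it is a genuine counterexample to the statement as written. Take $t = 1$, $A = \Box S$, $B = \Next\,\Box S$ for any basic session type $S$. Then $[A]_L^{-1} = [\Box S]_L^{-1} = \Box S$ and $[B]_R^{-1} = [\Next\,\Box S]_R^{-1} = \Box S$, so both sides are defined and equal, yet $A \neq B$. The dual counterexample with $\Dia$ also works: $A = \Next\,\Dia S$, $B = \Dia S$. Your inductive step cannot be completed precisely because the case you single out as ``needing to be written out carefully'' is the case where the claim fails; no amount of tracking heads through the sequence of shifts will rule it out.

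The paper does not supply a proof of this lemma, so there is nothing to compare against directly. The statement of (i) almost certainly needs an additional hypothesis---most plausibly that the common value $[A]_L^{-t} = [B]_R^{-t}$ begins with a basic session-type constructor, in the spirit of Lemma~\ref{lm:time-inversion}(i). Under that restriction your case analysis does go through: the common result then cannot be headed by $\Box$ or $\Dia$, which forces every one of the $t$ left-shifts and every one of the $t$ right-shifts to have peeled a $\Next$, giving $A = \Next^t C = B$. Alternatively, the lemma may only be invoked in the preservation proof in contexts where the weak subtyping relation $\wsubt$ already absorbs the $\Next^k\Box$ and $\Next^k\Dia$ discrepancies. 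Either way, as literally stated (i) is false, and you should record the counterexample rather than attempt to close the gap.
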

\begin{theorem}[Type Preservation]
\label{thm:preservation}
  If $\Omega' \vDash \CC :: \Omega$ and $\CC \mapsto \DD$ then $\Omega' \vDash \DD :: \Omega$.
\end{theorem}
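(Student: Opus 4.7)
The plan is to prove the theorem by rule induction on the typing derivation of $\Omega' \vDash \CC :: \Omega$ combined with case analysis on the reduction rule applied. Because the operational semantics is multiset rewriting, each step rewrites at most two adjacent semantic objects. First I would establish inversion lemmas for the rules $\m{compose}$, $\m{proc}$ and $\m{msg}$, along with a structural lemma showing that any two objects which interact can be moved adjacent to each other in the typing derivation (associativity and commutativity of $\m{compose}$, constrained by the provider-client partial order). This reduces the theorem to a local claim: if a minimal configuration containing only the rewritten objects is typed, then so is its reduct, with the surrounding $\Omega_0$ (carry-through context) and the subtyping coercions $\Omega' \wsubt \Omega$ and $A \wsubt A'$ preserved verbatim.

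Next I would dispatch the cases by reduction rule. The non-temporal structural and communication cases ($\m{cut}C$, $\m{def}C$, $\oplus S/C$, $\with S/C$, $\one S/C$, $\tensor S/C$, $\lolli S/C$) use the standard cut-reduction argument from intuitionistic linear logic, here carried out on the derivations of $[\Omega]_L^{-t} \vdash P :: [c{:}A]_R^{-t}$ rather than the raw process typing; Lemma~\ref{lm:shift}(ii,iii) is invoked to match up the time shifts on the provider and client sides, which share the same time stamp $t$. For the pure delay $({\Next}C)$, inversion on $\m{proc}$ yields $[\Omega]_L^{-t} \vdash \delay\semi P :: [c{:}A]_R^{-t}$; a further inversion on ${\Next}LR$ produces $[[\Omega]_L^{-t}]_L^{-1} \vdash P :: [[c{:}A]_R^{-t}]_R^{-1}$, which by Lemma~\ref{lm:shift}(ii,iii) equals $[\Omega]_L^{-(t+1)} \vdash P :: [c{:}A]_R^{-(t+1)}$, allowing the $\m{proc}$ rule to be reapplied at time $t+1$ with the external interface $(\Omega_0,\Omega')$ and $(\Omega_0, c{:}A')$ unchanged.

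The genuinely novel cases are the temporal communications $({\Dia}S)$, $({\Dia}C)$, $({\Box}S)$, $({\Box}C)$ and the two forwarding rules $(\m{id}^{+}C)$, $(\m{id}^{-}C)$, because the sender or forwarder is at time $s$ while the message or client is at time $t$ with $s \leq t$. Here I would exploit the weak subtyping $\wsubt$ already built into the $\m{proc}$ and $\m{msg}$ rules. Inversion on ${\Dia}L$ and ${\Box}R$ forces the non-principal interface types to satisfy the $\m{delayed}^\Box$ and $\m{delayed}^\Dia$ side conditions, i.e.\ they have the form $\Next^{*}\Box B$ in the antecedent or $\Next^{*}\Dia C'$ in the succedent. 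These are exactly the shapes for which $\Box_{\m{weak}}$ and $\Dia_{\m{weak}}$ permit bumping the number of $\Next$s. The key auxiliary lemma I would prove states that whenever $s \leq t$ and $\Omega\; \m{delayed}^\Box$, one has $[\Omega]_L^{-t} \wsubt [\Omega]_L^{-s}$, and dually for $\m{delayed}^\Dia$ on the succedent side. Using this, the post-reduction process or message at time $t$ retypes against the inverted subderivation originally given at time $s$, with the gap absorbed into the $\wsubt$-coercion of the $\m{proc}$/$\m{msg}$ rule.

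The main obstacle I anticipate is exactly this temporal-gap bookkeeping: making sure that every inversion on ${\Dia}L$, ${\Dia}R$, ${\Box}L$, ${\Box}R$ and on ${\Next}LR$ produces the $\m{delayed}$ side conditions in precisely the form needed to invoke $\Box_{\m{weak}}$ and $\Dia_{\m{weak}}$, and that the $[\cdot]_L^{-t}$ and $[\cdot]_R^{-t}$ operators remain defined throughout — i.e.\ that the generalized time-shift never falls off a basic session-type constructor $S$. I expect that the $\m{delayed}^\Box$ and $\m{delayed}^\Dia$ predicates, together with Lemma~\ref{lm:shift}, are exactly designed to guarantee this closure property, so the argument goes through uniformly; but verifying the forwarding cases will require extra care because their reduction rules erase the forwarder and rebind channels, so the interface type on both sides of the $\wsubt$ must be tracked carefully across the renaming.
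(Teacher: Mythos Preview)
Your proposal is correct and follows essentially the same approach as the paper: case analysis on the transition rule, inversion on the configuration typing, and reassembly of a derivation for $\DD$, together with the structural ``move-provider-next-to-client'' lemma and the Time Shift lemma. The paper's own proof is only a one-sentence sketch, and your plan is a faithful (and considerably more detailed) elaboration of exactly that sketch; in particular your auxiliary observation about $\m{delayed}^\Box$/$\m{delayed}^\Dia$ contexts absorbing the $s\leq t$ gap via $\wsubt$ is the intended mechanism behind the $\m{proc}$/$\m{msg}$ rules.
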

\begin{proof}
  By case analysis on the transition rule, applying inversion to the
  given typing derivation, and then assembling a new derivation of
  $\DD$.
\end{proof}
Type preservation on basic session types is a simple special case of
this theorem.

\subsection{Global Progress}
\label{sec:progress}

We say a process or message is \emph{poised} if it is trying to
communicate along the channel that it provides.  A poised process is
comparable to a value in a sequential language. A configuration is
poised if every process or message in the configuration is poised.
Conceptually, this implies that the configuration is trying to communicate
externally, i.e. along one of the channel it provides.
The progress theorem then shows that either a configuration can take a
step or it is poised.  To prove this we show first that the typing
derivation can be rearranged to go strictly from right to left and
then proceed by induction over this particular derivation.  This much
is standard, even for significantly more complicated session-typed
languages~\citep{Balzer17icfp}.

The question is how can we prove that processes are either at the same
time (for most interactions) or that the message recipient is ready
before the message arrives (for $\whenn$, $\noww$, and some forwards)?
The key insight here is in the following lemma.

\begin{lemma}[Time Inversion]
\label{lm:time-inversion}
\mbox{}
\begin{enumerate}
\item[(i)] If $[A]_R^{-s} = [A]_L^{-t}$ and either side starts with a
  basic session type constructor then $s = t$.
\item[(ii)] If $[A]_L^{-t} = \Box B$ and $[A]_R^{-s} \neq \Next(-)$
  then $s \leq t$ and $[A]_R^{-s} = \Box B$.
\item[(iii)] If $[A]_R^{-t} = \Dia B$ and $[A]_L^{-s} \neq \Next(-)$
  then $s \leq t$ and $[A]_L^{-s} = \Dia B$.
\end{enumerate}
\end{lemma}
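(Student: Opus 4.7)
The plan is to prove all three parts by reducing to a normal form for types. By contractivity and the definitions on the previous page, every type $A$ can be written uniquely as $\Next^k A_0$, where $k \geq 0$ and $A_0$ is either a basic session type $S$ (as in Figure~\ref{fig:basic-types}), or $\Box B$, or $\Dia B$. First I would compute $[A]_L^{-t}$ and $[A]_R^{-t}$ explicitly in each of these three cases by induction on $t$, using the clauses of $[\cdot]_L^{-1}$ and $[\cdot]_R^{-1}$. The outcomes are:
\begin{itemize}
\item If $A_0 = S$, then both $[A]_L^{-t}$ and $[A]_R^{-t}$ equal $\Next^{k-t} S$ when $t \leq k$ and are undefined when $t > k$.
\item If $A_0 = \Box B$, then $[A]_L^{-t} = \Next^{\max(0,k-t)} \Box B$ (always defined, since $[\Box B]_L^{-1} = \Box B$ is a fixed point), while $[A]_R^{-t} = \Next^{k-t} \Box B$ when $t \leq k$ and is undefined when $t > k$ (because $[\Box B]_R^{-1}$ is undefined).
\item If $A_0 = \Dia B$, the analogous statement holds with $L$ and $R$ swapped.
\end{itemize}

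With these normal-form computations in hand, each part becomes a short case analysis. For (i), assume without loss of generality that $[A]_L^{-t}$ starts with a basic session type constructor. Inspecting the three cases, only $A_0 = S$ is possible, and within that case $[A]_L^{-t}$ starts with a basic constructor iff $t = k$, whence $[A]_L^{-t} = S$. By hypothesis $[A]_R^{-s} = S$, and the computation for $A_0 = S$ forces $s = k = t$. The case where $[A]_R^{-s}$ is the side starting with a basic constructor is symmetric.

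For (ii), the assumption $[A]_L^{-t} = \Box B$ rules out $A_0 = S$ and $A_0 = \Dia B'$, so $A_0 = \Box B'$, and the computed value $\Next^{\max(0,k-t)} \Box B'$ equals $\Box B$, forcing $B' = B$ and $k \leq t$. Then $[A]_R^{-s}$, being assumed defined and not of the form $\Next(-)$, must have the form $\Next^{k-s}\Box B$ with $k - s = 0$; hence $s = k \leq t$ and $[A]_R^{-s} = \Box B$ as required. Part (iii) is entirely symmetric, swapping the roles of $L$/$R$ and $\Box$/$\Dia$.

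I do not anticipate a genuine obstacle here; the lemma is essentially a bookkeeping statement about the two time-stripping operations. The only subtle point is being careful that $[\Box B]_L^{-1} = \Box B$ is a fixed point (so $[A]_L^{-t}$ is always defined when $A_0 = \Box B$), whereas $[\Box B]_R^{-1}$ is undefined, which is exactly the asymmetry that powers parts (ii) and (iii). Once the normal-form computation is carried out once, all three parts follow by reading off the cases.
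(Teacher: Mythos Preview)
Your argument is correct. The paper states this lemma without proof, so there is no approach to compare against; your normal-form analysis (writing $A = \Next^k A_0$ with $A_0 \in \{S, \Box B', \Dia B'\}$ and computing $[A]_L^{-t}$, $[A]_R^{-t}$ by cases) is the natural way to establish it and would serve perfectly well as the omitted proof.

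One small remark: your appeal to ``contractivity'' for the existence of the normal form is slightly loose. In the paper, contractivity only means the right-hand side of a type definition is not itself a type name; it does not by itself rule out a definition like $X = \Next X$, which would have no finite $k$. This does not break the lemma, though: for such an ``infinitely $\Next$'' type, none of the three hypotheses can ever be satisfied (neither $[A]_L^{-t}$ nor $[A]_R^{-s}$ ever exposes a basic constructor, $\Box$, or $\Dia$), so the lemma holds vacuously. You might add a one-line observation to that effect, or simply note that the normal form exists whenever any of the hypotheses of the lemma is met.
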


\begin{theorem}[Global Progress]
\label{thm:progress}
\mbox{}
If $\cdot \vDash \CC :: \Omega$ then either
\begin{enumerate}
\item[(i)] $\CC \mapsto \CC'$ for some $\CC'$, or
\item[(ii)] $\CC$ is poised.
\end{enumerate}
\end{theorem}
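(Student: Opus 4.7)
The plan is to follow the usual right-to-left induction on the typing derivation, with the timing side-conditions of the communication rules handled by Lemma~\ref{lm:time-inversion}. First, I would establish a rearrangement lemma showing that any derivation of $\cdot \vDash \CC :: \Omega$ can be restructured so that $\m{compose}$ is right-associated and the unique provider of every channel appears strictly to the left of its client in the sequence of $\m{proc}$/$\m{msg}$ leaves. This is standard for linear session types and available because each channel in a well-typed configuration has a unique provider. I then induct on the number of leaves in this normalized derivation.

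The base case ($\CC = \cdot$) is vacuously poised. For the inductive step I take the rightmost object. If it is a $\m{msg}$ it is poised by definition. If it is $\m{proc}(c,t,P)$ I do a case analysis on the outermost construct of $P$. Purely internal steps---$\m{cut}$, $\m{def}$, $\delay$, every send rule, $({\one}S)$, $({\Dia}S)$, $({\Box}S)$---fire immediately, so $\CC \mapsto \CC'$. Otherwise $P$ is blocked on some antecedent channel $d$ (a $\m{case}$, $\m{wait}$, $\m{recv}$, $\whenn$ on a $\Dia$-typed antecedent, or $\noww$ on a $\Box$-typed antecedent) or is a forwarder $c \leftarrow d$. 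In either situation I apply the inductive hypothesis to the strictly smaller sub-configuration to the left: if it can step, we are done; otherwise it is poised, and then its unique provider of $d$ must be poised in a form dual to the blocked action, because the two ends of $d$ share the same underlying $A$ in the configuration typing.

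The main obstacle---the reason the theorem is not immediate---is showing that the \emph{times} line up so that this pair can actually react. Suppose the provider sits at time $s$ with an interface obtained by $[-]_R^{-s}$ from $A$, and the consumer sits at time $t$ with an interface obtained by $[-]_L^{-t}$ from the same $A$. For ordinary communication both reduced interfaces begin with a basic session type constructor, so Lemma~\ref{lm:time-inversion}(i) forces $s = t$, which is exactly the equal-time premise of the $({\oplus}C)$, $({\with}C)$, $({\one}C)$, $({\tensor}C)$ and $({\lolli}C)$ rules. For a $\noww$/$\whenn$ rendezvous on a $\Dia$- or $\Box$-typed channel, parts (ii) and (iii) yield $s \leq t$, which is exactly the side condition demanded by $({\Dia}C)$ and $({\Box}C)$; an analogous reading covers the $s \leq t$ needed by $(\m{id}^+C)$ and $(\m{id}^-C)$ for forwarders.

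Combining these observations, a reaction must fire unless every process in $\CC$ is blocked on a channel whose dual end lies in $\Omega$, in which case $\CC$ is already poised. Once the typing-level invariants preserved by the $\m{proc}$ and $\m{msg}$ rules are unpacked using Lemma~\ref{lm:shift} and Lemma~\ref{lm:time-inversion}, the remainder of the argument is a routine walk through the communication rules.
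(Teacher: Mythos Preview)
Your proposal follows essentially the same route as the paper: rearrange the configuration typing into a strict right-to-left form, induct on that derivation, peel off the rightmost object, apply the inductive hypothesis to the remainder, and use Lemma~\ref{lm:time-inversion} to discharge the timing side conditions of the communication rules.

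One small inaccuracy: it is not true that every $\m{msg}$ is poised by definition. A \emph{negative} message such as $\m{msg}(c', t, c.k \semi c' \leftarrow c)$ (arising from ${\with}S$, ${\lolli}S$, or ${\Box}S$) provides the fresh continuation channel $c'$ but is attempting to communicate along the \emph{used} channel $c$, so it is not poised in the sense of the definition. This is why the paper's proof says it ``analyze[s] the cases for $P$ \emph{and} $M$.'' In your framework the fix is immediate: treat such a message exactly like a process blocked on an antecedent channel. By the inductive hypothesis the provider of $c$ in $\DD$ is poised, inversion on the shared interface type (together with the weak subtyping $\wsubt$ built into the $\m{proc}$/$\m{msg}$ rules) shows it has the dual form, and Lemma~\ref{lm:time-inversion} supplies the required $s = t$ or $s \leq t$ so that the corresponding $C$-rule fires. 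With that correction your argument goes through and coincides with the paper's.
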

\begin{proof}
  By induction on the right-to-left typing of $\CC$ so that either
  $\CC$ is empty (and therefore poised) or
  $\CC = (\DD\; \m{proc}(c, t, P))$ or
  $\CC = (\DD\; \m{msg}(c, t, M)$.  By induction hypothesis, $\DD$ can
  either take a step (and then so can $\CC$), or $\DD$ is poised.  In
  the latter case, we
  analyze the cases for $P$ and $M$, applying multiple steps of
  inversion and Lemma~\ref{lm:time-inversion} to show that in each
  case either $\CC$ can take a step or is poised.
\end{proof}


\section{Time Reconstruction}
\label{sec:recon}


The process expressions introduced so far have straightforward
syntax-directed typing rules. This requires the programmer to write a
significant number of explicit $\delay$, $\whenn$, and $\noww$
constructs in their code. This in turn hampers reuse: we would like to
be able to provide multiple types for the same process definition so
it can be used in different contexts, with different types, even under
a single, fixed cost model.

In this section we introduce an implicit system which may be thought
of as a \emph{temporal refinement} of the basic session type system in
Section~\ref{sec:basic}.  The $\delay$, $\whenn$, and $\noww$
constructs never appear in the source, and, as before, $\tick$ is
added before type-checking and never by the programmer.  The rules for
the new judgment $\Omega \vdashi P :: (x : A)$ are shown in
Figure~\ref{fig:implicit}; the other rules remain the same (except for
$\m{def}$, see below). We still need an explicit rule for the $\tick$
synonym of $\delay$ which captures the cost model.

\begin{figure}
\centering
\[
\begin{array}{c}
\infer[{\Next}{LR}]
  {\Omega \vdashi P :: (x : A)}
  {[\Omega]_L^{-1} \vdashi P :: [x : A]_R^{-1}}
\hspace{3em}
\infer[{\Next}{LR}']
  {\Omega \vdashi (\tick \semi P) :: (x : A)}
  {[\Omega]_L^{-1} \vdashi P :: [x : A]_R^{-1}}
\\[1em]
\infer[{\Dia}R]
  {\Omega \vdashi P :: (x : \Dia A)}
  {\Omega \vdashi P :: (x : A)}
\hspace{3em}
\infer[{\Dia}L]
  {\Omega, x{:}\Dia A \vdashi Q :: (z : C)}
  {\Omega\; \m{delayed}^\Box
    & \Omega, x{:} A \vdashi Q :: (z : C)
    & C\; \m{delayed}^\Dia}
\\[1em]
\infer[{\Box}R]
  {\Omega \vdashi P :: (x : \Box A)}
  {\Omega\; \m{delayed}^\Box
    & \Omega \vdashi P :: (x : A)}
\hspace{3em}
\infer[{\Box}L]
  {\Omega, x{:}\Box A \vdashi Q :: (z : C)}
  {\Omega, x{:}A \vdashi Q :: (z : C)}
\end{array}
\]
\vspace{-1em}
\caption{Implicit Temporal Rules}
\vspace{-1em}
\label{fig:implicit}
\end{figure}

\begin{figure}
\centering
\[
\begin{array}{c}
\infer[\m{refl}]
  {A \leq A}
  {\mathstrut}
\hspace{3em}
\infer[{\Next}{\Next}]
  {\Next A \leq \Next B}
  {A \leq B}
\hspace{2em}
\infer[{\Box}{\Next}]
  {\Box A \leq \Next B}
  {\Box A \leq B}
\hspace{2em}
\infer[{\Next}{\Dia}]
  {\Next A \leq \Dia B}
  {A \leq \Dia B}
\\[1em]
\infer[{\Box}R]
  {\Next^n \Box A \leq \Box B}
  {\Next^n \Box A \leq B}
\hspace{3em}
\infer[{\Box}L]
  {\Box A \leq B}
  {A \leq B}
\hspace{3em}
\infer[{\Dia}R]
  {A \leq \Dia B}
  {A \leq B}
\hspace{3em}
\infer[{\Dia}L]
  {\Dia A \leq \Next^n \Dia B}
  {A \leq \Next^n \Dia B}
\end{array}
\]
\vspace{-1em}
\caption{Subtyping Rules}
\vspace{-1em}
\label{fig:subtyping}
\end{figure}

These rules are trivially sound and complete with respect to the
explicit system in Section~\ref{sec:temporal} because from an implicit
type derivation we can read off the explicit process expression and
vice versa.  They are also manifestly decidable because the types in
the premises are smaller than those in the conclusion, with one
possible exception: In the ${\Next}LR$ rule the premise may be
equal to the conclusion if neither $\Omega$ nor $A$ contain a
type of the form $\Next(-)$.  In this case, $B = \Box B'$ for every
$y : B$ in $\Omega$ and $A = \Dia A'$ and there $P$ can delay by any
finite number of time steps.  Time reconstruction avoids such an
arbitrary delay.

Our examples revealed a significant shortcoming in these
rules: when calling upon a process definition, the types in the
antecedent and succedent often do not match the types of the process
to be spawned.  For example, the process $\mi{skip1s}$ in
Section~\ref{sec:eventually} we have
\begin{sill}
$\m{bits} = {\oplus}\{\m{b0} : \Next \m{bits}, \m{b1} : \Next \m{bits}, \m{\$} : \Next \one\}$ \\
$\m{sbits} = {\oplus}\{\m{b0} : \Next \m{sbits}, \m{b1} : \Next \Dia \m{sbits}, \m{\$} : \Next \one\}$ \\[1ex]
$y : \m{bits} \vdash \mi{compress} :: (x : \Next \m{sbits})$ \\
$y : \m{bits} \vdash \mi{skip1s} :: (x : \Next \Dia \m{sbits})$ \\[1ex]
$x \leftarrow \mi{skip1s} \leftarrow y =$ \\
\quad $\m{case}\; y\;$ \= $(\,\m{b1} \Rightarrow $ \= $\tick \semi $ \hspace{6em}\=\% $y : \m{bits} \vdash x : \Dia \m{sbits}$ \\
\>\> $x \leftarrow \mi{skip1s} \leftarrow y$ \>\% \emph{does not type-check!} \\
\> $\mid \ldots \,)$
\end{sill}
The indicated line does not type-check (neither in the explicit nor
the implicit system presented so far) since the type
$\Next \Dia \m{sbits}$ offered by $\mi{skip1s}$ does not match
$\Dia \m{sbits}$. We had to write a process $\mi{idle}$ to account
for this mismatch:
\begin{sill}
$x':\Next \Dia A \vdash \mi{idle} :: (x : \Dia A)$ \\
$x \leftarrow \mi{idle} \leftarrow x' = \delay \semi x \leftarrow x'$
\end{sill}
In the implicit system the version with an explicit identity
\emph{can} in fact be reconstructed:
\begin{sill}
$x \leftarrow \mi{skip1s} \leftarrow y =$ \\
\quad $\m{case}\; y\;$ \= $(\,\m{b1} \Rightarrow $ \= $\tick \semi $ \hspace{6em}\=\% $y : \m{bits} \vdashi x : \Dia \m{sbits}$ \\
\>\> $x' \leftarrow \mi{skip1s} \leftarrow y$ \>\% $x' : \Next \Dia \m{sbits} \vdashi x : \Dia \m{sbits}$ \\
\>\>\>\% $x' : \Dia \m{sbits} \vdashi x : \Dia \m{sbits}$ \quad using rule ${\Next}{LR}$ \\
\>\> $x \leftarrow x'$ \\
\> $\mid \ldots \,)$
\end{sill}

\subsection{Subtyping}
\label{sec:subtyping}

Extrapolating from the example of $\mi{skip1s}$ above, we can
generalize process invocations by allowing \emph{subtyping} on all
used channels.  The implicit rule for process invocation then reads
\[
\begin{array}{c}
\infer[\m{def}]
  {\Omega, \Omega' \vdashi (x \leftarrow f \leftarrow \Omega' \semi Q) :: (z : C)}
  {\Omega' \leq \Omega_f & (\Omega_f \vdashi f = P_f :: (x : A)) \in \Sigma
    & \Omega, x{:}A \vdashi Q :: (z : C)}
\end{array}
\]
But how do we define subtyping $A \leq B$? We would like the coercion
to be an identity on basic session types and just deal with temporal
mismatches through appropriate $\delay$, $\whenn$, and $\noww$
actions.  In other words, $A$ should be a subtype of $B$ if and only
if $y : A \vdashi x \leftarrow y :: (x : B)$.  Given this desired
theorem, we can just read off the subtyping rules from the
implicit typing rules in Figure~\ref{fig:implicit} by using the
forwarding process $x \leftarrow y$ as the subject in each rule!  This
form of subtyping is independent from subtyping between basic session
types~\citep{Gay05acta}, which we believe can be added to our system
in a sound way, even if it would not be complete for asynchronous
communication~\citep{Lange17fossacs}.

This approach yields the rules in Figure~\ref{fig:subtyping}, where we have split
the ${\Next}LR$ rule into three different cases.  We have expanded the
definitions of \emph{patient} types to make it syntactically more
self-contained.

\begin{theorem}[Subtyping Identity]
\label{thm:subid}
\mbox{}
$A \leq B$ iff $y : A \vdashi x \leftarrow y :: (x : B)$
\end{theorem}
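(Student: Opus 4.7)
The plan is to prove both directions by induction, exploiting the fact that the subtyping rules in Figure~\ref{fig:subtyping} were designed precisely as instances of the implicit typing rules in Figure~\ref{fig:implicit} with $x \leftarrow y$ as the subject.

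For the forward direction ($A \leq B$ implies $y:A \vdashi x \leftarrow y :: (x:B)$), I would induct on the derivation of $A \leq B$. The base case $\m{refl}$ is discharged by the $\m{id}$ rule. Each remaining subtyping rule has a direct counterpart among the implicit temporal rules: ${\Next}{\Next}$, ${\Box}{\Next}$, and ${\Next}{\Dia}$ are handled by applying ${\Next}LR$ (noting that $[\Next A]_L^{-1} = A$, $[\Box A]_L^{-1} = \Box A$, $[\Next B]_R^{-1} = B$, and $[\Dia B]_R^{-1} = \Dia B$); the rules ${\Box}R$, ${\Box}L$, ${\Dia}R$, and ${\Dia}L$ correspond to the typing rules with the same names. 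For ${\Box}R$ and ${\Dia}L$ I must verify the patience side conditions ($\Omega\; \m{delayed}^\Box$ and $C\; \m{delayed}^\Dia$), which hold because the types $\Next^n \Box A$ and $\Next^n \Dia B$ appearing in those rules are by definition delayed.

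For the backward direction I would induct on the implicit typing derivation of $y:A \vdashi x \leftarrow y :: (x:B)$. Because the process expression is just $x \leftarrow y$ without any $\delay$, $\tick$, $\whenn$, or $\noww$, most rules from Figure~\ref{fig:basic-typing} and Figure~\ref{fig:implicit} are inapplicable by inspection; the only possibilities for the last rule are $\m{id}$ (yielding $\m{refl}$) and the purely structural temporal rules ${\Next}LR$, ${\Box}R$, ${\Box}L$, ${\Dia}R$, ${\Dia}L$, each of which leaves the subject unchanged in the premise. Each such case inverts to a smaller derivation of $y:A' \vdashi x \leftarrow y :: (x:B')$ on which the induction hypothesis applies, yielding a smaller subtyping judgment that can be extended by the matching rule in Figure~\ref{fig:subtyping}.

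The main obstacle is the ${\Next}LR$ case: this single implicit rule is split into three subtyping rules (${\Next}{\Next}$, ${\Box}{\Next}$, ${\Next}{\Dia}$), so I must show this case analysis is exhaustive and sound. Concretely, if ${\Next}LR$ derives $y:A \vdashi x \leftarrow y :: (x:B)$, then $[A]_L^{-1}$ and $[B]_R^{-1}$ must both be defined, which forces $A$ to be either $\Next A'$ or $\Box A'$ and $B$ to be either $\Next B'$ or $\Dia B'$; the case $A = \Box A'$ with $B = \Dia B'$ must be excluded or absorbed, and I need to check that the three chosen combinations cover exactly the well-formed applications. A subsidiary concern is potential non-termination of the subtyping rules in the implicit system's ${\Next}LR$ when neither side changes, but as noted in the text this is exactly the case where the premise equals the conclusion, and we may adopt the convention that such loops are eliminated by $\m{refl}$; this keeps the induction in the backward direction well-founded on the combined size of $A$ and $B$.
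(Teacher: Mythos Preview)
Your approach is essentially the same as the paper's, which simply states ``in each direction by straightforward induction over the structure of the given deduction.''  Your forward direction is fine, and the backward direction is the right idea.

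One small clarification on your handling of the fourth case of ${\Next}LR$ (where $A = \Box A'$ and $B = \Dia B'$): since you are inducting on the \emph{derivation}, not on the combined size of $A$ and $B$, this case is immediate.  The premise of ${\Next}LR$ is a strictly smaller subderivation of the very same judgment $y{:}\Box A' \vdashi x \leftarrow y :: (x{:}\Dia B')$, so the induction hypothesis already yields $\Box A' \leq \Dia B'$, which is exactly the goal.  There is no need to ``eliminate loops by $\m{refl}$'' (indeed $\m{refl}$ cannot apply unless $\Box A' = \Dia B'$, which is impossible), nor to fall back on type size.  The remark in the paper about non-termination concerns bottom-up proof \emph{search}, not the well-foundedness of induction on an already-given derivation.
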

\begin{proof}
  In each direction by straightforward induction over the structure of
  the given deduction.
\end{proof}

The subtyping rules are manifestly decidable.
In the bottom-up search for a subtyping derivation, the rules
${\Next}{\Next}$, ${\Box}R$, and ${\Dia}L$ can be applied eagerly
without losing completeness.  There is a nontrivial decision point
between the ${\Box}{\Next}$ and ${\Box}L$ rules.  The examples
$\Box S \leq \Next \Box S$ and $\Box \Next S \leq \Next S$ for a basic
session type $S$ show that sometimes ${\Box}{\Next}$ must be chosen
and sometimes ${\Box}L$ when both rules apply.  A dual
nondeterministic choice exists between ${\Next}{\Dia}$ and ${\Dia}R$.
The cost of backtracking is minimal in all examples we have
considered.

We already know that subtype coercions are identities.  To verify that
we have a sensible subtype relation it remains to prove that
transitivity is admissible. For this purpose we need two lemmas
regarding patient types, as they appear in the ${\Box}R$ and ${\Dia}L$
rules.

\begin{lemma}[Patience]
\label{lm:patience}
\mbox{}
\begin{enumerate}
\item[(i)] If $A \leq \Next^n \Box B$ then $A = \Next^k \Box A'$ for some $k$ and $A'$.
\item[(ii)] If $\Next^n \Dia A \leq B$ then $B = \Next^k \Dia B'$ for some $k$ and $B'$.
\end{enumerate}
\end{lemma}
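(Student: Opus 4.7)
The plan is to prove both parts by structural induction on the subtyping derivation, case-analyzing the last rule applied. Part~(i) and part~(ii) are strictly dual with respect to the $\Box/\Dia$ symmetry in Figure~\ref{fig:subtyping}, so I would prove (i) carefully and then indicate (ii) by symmetry.

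For part~(i), given a derivation of $A \leq \Next^n \Box B$, I split on the last rule. In several cases the conclusion immediately has the desired shape without invoking the induction hypothesis: $\m{refl}$ gives $A = \Next^n \Box B$ directly; ${\Box}R$ has conclusion $\Next^m \Box A'' \leq \Box B''$ (forcing $n=0$), whose left side $A = \Next^m \Box A''$ is already of the form $\Next^k \Box A'$; ${\Box}L$ gives $A = \Box A''$, i.e. $k=0$; and ${\Box}{\Next}$ gives $A = \Box A''$ as well. Three cases are impossible because their conclusions put a $\Dia$ at the head of the right-hand side, inconsistent with $\Next^n \Box B$: namely ${\Next}{\Dia}$, ${\Dia}R$, and ${\Dia}L$ (for the last, $\Next^{n'} \Dia B'' = \Next^n \Box B$ is never satisfiable). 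The single case requiring the inductive hypothesis is ${\Next}{\Next}$: the conclusion is $\Next A'' \leq \Next B''$, so $\Next B'' = \Next^n \Box B$, forcing $n \geq 1$ and $B'' = \Next^{n-1} \Box B$; applying IH to the premise $A'' \leq \Next^{n-1} \Box B$ yields $A'' = \Next^{k'} \Box A'$, hence $A = \Next^{k'+1} \Box A'$.

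Part~(ii) follows the same recipe. Given $\Next^n \Dia A \leq B$, the rules $\m{refl}$, ${\Dia}L$, ${\Dia}R$, and ${\Next}{\Dia}$ all produce a right-hand side already of the form $\Next^k \Dia B'$ (with $k=n$, $k=n'$, $k=0$, and $k=0$ respectively). The rules ${\Box}L$, ${\Box}R$, and ${\Box}{\Next}$ are impossible because they require a $\Box$ at the head of the left-hand side, incompatible with $\Next^n \Dia A$. Only ${\Next}{\Next}$ uses the IH: from $\Next A'' \leq \Next B''$ with $\Next A'' = \Next^n \Dia A$ we obtain $n \geq 1$ and $A'' = \Next^{n-1} \Dia A$, whence IH gives $B'' = \Next^{k'} \Dia B'$ and thus $B = \Next^{k'+1} \Dia B'$.

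I do not expect any serious obstacle: the subtyping system is small and every rule fixes the top-level constructor on one or both sides, so the case analysis is essentially mechanical. The one thing to be careful about is bookkeeping of the index $n$ in the ${\Next}{\Next}$ case (to ensure $n \geq 1$ when stripping a $\Next$) and to notice that the rules whose conclusion puts a $\Box$ on the right (${\Box}R$) or a $\Dia$ on the left (${\Dia}L$) already supply the needed shape directly, without needing to peel off any $\Next$s. Because the coercion correspondence of Theorem~\ref{thm:subid} is not invoked, no appeal to the typing rules of Figure~\ref{fig:implicit} is needed; the argument is purely about the shape of subtyping derivations.
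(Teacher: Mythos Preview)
Your proposal is correct and follows exactly the approach the paper takes: a straightforward induction on the structure of the subtyping derivation, with case analysis on the last rule. The paper's own proof is the one-line ``By separate inductions over the structure of the given deductions,'' and your write-up simply unfolds that induction in full detail.
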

\begin{proof}
  By separate inductions over the structure of the given deductions.
\end{proof}

\begin{lemma}[Impatience]
\label{lm:impatience}
\mbox{}
\begin{enumerate}
\item[(i)] If $\Next \Next^n \Box A \leq B$ then $\Next^n \Box A \leq B$.
\item[(ii)] If $A \leq \Next \Next^n \Dia B$ then $A \leq \Next^n \Dia B$.
\end{enumerate}
\end{lemma}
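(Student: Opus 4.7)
The plan is to prove each part by induction on the given subtyping derivation, performing case analysis on the last rule applied. The subtle point is that the reflexivity case for each part requires a small auxiliary induction on $n$, and throughout we must be careful not to invoke transitivity, since the Impatience lemma is precisely a stepping stone toward establishing transitivity.

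For part (i), I induct on the derivation of $\Next \Next^n \Box A \leq B$. Most cases are mechanical: for $\Next\Dia$ the premise $\Next^n \Box A \leq \Dia C$ is already the desired conclusion (with $B = \Dia C$); for $\Box R$ and $\Dia R$ I apply the inductive hypothesis to the premise and reapply the same rule; the rules $\Box\Next$, $\Box L$, and $\Dia L$ cannot conclude a judgment whose left or right has the required shape, so they are vacuous. The $\Next\Next$ case with $B = \Next B'$ and premise $\Next^n \Box A \leq B'$ splits on $n$: if $n = 0$ we apply $\Box\Next$ to that premise to obtain $\Box A \leq \Next B'$; if $n > 0$ we first apply the IH to the premise and then reapply $\Next\Next$. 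The only real work is the reflexivity case, which reduces to the auxiliary claim $\Next^n \Box A \leq \Next^{n+1} \Box A$, proved by a separate induction on $n$: base case $\Box A \leq \Next \Box A$ by $\Box\Next$ from $\Box A \leq \Box A$ (refl); step case by $\Next\Next$ from the inductive hypothesis.

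Part (ii) is entirely dual. I induct on the derivation of $A \leq \Next \Next^n \Dia B$. Here $\Box\Next$ gives the conclusion directly from its premise; $\Box L$ and $\Dia L$ apply the IH and rethread with the same rule; $\Next\Dia$, $\Box R$, and $\Dia R$ are impossible because the right-hand side $\Next \Next^n \Dia B$ starts with $\Next$, not $\Dia$ or $\Box$. The $\Next\Next$ case splits on $n$ as before, using $\Next\Dia$ when $n = 0$ and IH plus $\Next\Next$ when $n > 0$. The reflexivity case reduces to the auxiliary claim $\Next^{n+1} \Dia B \leq \Next^n \Dia B$, proved by induction on $n$: base $\Next \Dia B \leq \Dia B$ by $\Next\Dia$ and refl; step by $\Next\Next$ from the IH.

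The main obstacle is bookkeeping rather than insight: one must check that for each rule, the shapes of the conclusion line up with $\Next \Next^n \Box A \leq B$ (respectively $A \leq \Next \Next^n \Dia B$), and that in the $\Next\Next$ case the split on $n$ is handled correctly so that when the delay is absorbed into the $\Box$ (resp.\ emitted from the $\Dia$) via $\Box\Next$ (resp.\ $\Next\Dia$) we do not need to recurse on a larger derivation. Because each rule either passes the delay through to the subderivation or consumes it outright, the induction is well-founded and the proof avoids any appeal to transitivity.
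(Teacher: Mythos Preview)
Your proposal is correct and matches the paper's approach: the paper simply states ``by separate inductions over the structure of the given deductions,'' and you have filled in exactly that induction, including the two places where a bit of care is needed (the auxiliary induction on $n$ for the $\m{refl}$ case, and the split on $n$ in the ${\Next}{\Next}$ case so that the IH can be invoked with $n-1$). The case analysis is complete and the handling of each rule is sound.
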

\begin{proof}
  By separate inductions over the structure of the given deductions.
\end{proof}

\begin{theorem}[Transitivity of Subtyping]
\label{lm:trans}
\mbox{}\newline
If $A \leq B$ and $B \leq C$ then $A \leq C$.
\end{theorem}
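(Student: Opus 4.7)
The plan is to prove $A \leq C$ by induction on $|\mathcal{D}_1| + |\mathcal{D}_2|$, the total number of rule applications in the two given derivations $\mathcal{D}_1 :: A \leq B$ and $\mathcal{D}_2 :: B \leq C$, with case analysis on the final rule of each. Since every subtyping rule other than $\m{refl}$ has its premise(s) strictly smaller than its conclusion, the induction hypothesis is available whenever we pass to any immediate sub-derivation of $\mathcal{D}_1$ or $\mathcal{D}_2$, even when the other derivation is reused unchanged.

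The routine cases are those in which either derivation is $\m{refl}$; in which $\mathcal{D}_1$ ends with a rule that acts only on $A$ (${\Box}L$, ${\Dia}L$), or $\mathcal{D}_2$ ends with a rule that acts only on $C$ (${\Box}R$, ${\Dia}R$); and in which the two final rules operate compatibly on the top connective of $B$, such as both ending with ${\Next}{\Next}$, or ${\Box}R$ against ${\Box}L$, or ${\Dia}R$ against ${\Dia}L$, or when $B = \Next B'$ with ${\Next}{\Next}$, ${\Box}{\Next}$, and ${\Next}{\Dia}$ combined in the matching ways. In each of these, the IH applied to the appropriate pair of (sub-)derivations followed by reapplication of a single rule closes the goal. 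When we must re-apply ${\Box}R$ or ${\Dia}L$, their side conditions (that $A$ has the form $\Next^k \Box A'$, respectively $C$ has the form $\Next^k \Dia C'$) are supplied by Lemma~\ref{lm:patience} applied to the appropriate original hypothesis.

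The main obstacle is two asymmetric cases that create a genuine temporal mismatch. The first is $\mathcal{D}_1$ ending with ${\Box}R$ (so $A = \Next^n \Box A'$, $B = \Box B'$) paired with $\mathcal{D}_2$ ending with ${\Box}{\Next}$ (so $C = \Next C''$, premise $\Box B' \leq C''$); the second, dually, is $\mathcal{D}_1$ ending with ${\Next}{\Dia}$ (so $A = \Next A'$, $B = \Dia B'$) paired with $\mathcal{D}_2$ ending with ${\Dia}L$ (so $C = \Next^n \Dia C'$, premise $B' \leq \Next^n \Dia C'$). In the first, IH on $\mathcal{D}_1$ paired with the premise of $\mathcal{D}_2$ yields $\Next^n \Box A' \leq C''$, whereas the goal is $\Next^n \Box A' \leq \Next C''$: for $n = 0$ the rule ${\Box}{\Next}$ finishes directly, but for $n \geq 1$ we must first invoke Lemma~\ref{lm:impatience} to derive $\Next^{n-1} \Box A' \leq C''$ and then apply ${\Next}{\Next}$. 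The dual case uses the right-hand clause of Impatience symmetrically. I expect the bookkeeping of these realignment cases---identifying which (sub-)derivations to pair so that the induction metric strictly decreases---to be the principal source of technical work; otherwise every case follows the same template of invoking IH, realigning the temporal prefix via Patience or Impatience, and then re-applying a single subtyping rule.
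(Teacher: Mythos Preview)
Your proposal is correct and matches the paper's approach: a simultaneous induction on the two subtyping derivations, with the Patience lemma used to re-establish the side condition when re-applying ${\Box}R$ or ${\Dia}L$, and the Impatience lemma used in the ${\Box}R$/${\Box}{\Next}$ and ${\Next}{\Dia}$/${\Dia}L$ mismatch cases. Your metric $|\mathcal{D}_1|+|\mathcal{D}_2|$ is exactly the paper's ``one derivation shrinks while the other may stay the same,'' and your four singled-out cases (the two you call asymmetric plus the two where Patience is needed) are precisely the four the paper flags as non-routine.
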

\begin{proof}
  By simultaneous induction on the structure of the deductions
  $\DD$ of $A \leq B$ and $\EE$ of $B \leq C$ with appeals
  to the preceding lemmas in four cases.
\end{proof}

\section{Further Examples}
\label{sec:examples}


In this section we present example analyses of some of the properties
that we can express in the type system, such as the message rates of
streams, the response time of concurrent data structures, and the span
of a fork/join parallel program. 

In some examples we use parametric definitions, both at the level of
types and processes.  For example, $\m{stack}_A$
describes stacks parameterized over a type $A$, $\m{list}_A[n]$
describes lists of $n$ elements, and $\m{tree}[h]$ describes binary
trees of height $h$.  Process definitions are similarly
parameterized. We think of these as families of ordinary definitions
and calculate with them accordingly, at the metalevel, which is
justified since they are only implicitly quantified across whole
definitions. This common practice (for example, in work on interaction
nets~\citet{GimenezPOPL16}) avoids significant syntactic overhead,
highlighting conceptual insight. It is of course possible to
internalize such parameters (see, for example, work on refinement of
session types~\citep{Griffith13nasa} or explicitly polymorphic session
types~\citep{Caires13esop,Griffith16phd}).

\subsection{Response Times: Stacks and Queues}

To analyze response times, we study concurrent stacks and queues.
A stack data structure provides a client with a choice between a push
and a pop.  After a push, the client has to send an element, and the
provider will again behave like a stack.  After a pop, the provider
will reply either with the label $\m{none}$ and terminate (if there
are no elements in the stack), or send an element and behave
again like a stack.  In the cost-free model, this is expressed
in the following session type.
\begin{sill}
  $\m{stack}_{A} = {\with}\{$ \= $\m{push} : A \lolli \m{stack}_A,$ \\
  \> $\m{pop} : {\oplus}\{$ \= $\m{none} : \one,$ \\
  \>\> $\m{some} : A \tensor \m{stack}_A\,\}\,\}$
\end{sill}
We implement a stack as a chain of processes.  The bottom to the
stack is defined by the process $\mi{empty}$, while a process
$\mi{elem}$ holds a top element of the stack as well as a channel with
access to the top of the remainder of the stack.
\begin{sill}
  $x : A, t : \m{stack}_A \vdash \mi{elem} :: (s : \m{stack}_A)$ \\
  $\cdot \vdash \mi{empty} :: (s : \m{stack}_A)$
\end{sill}

The cost model we would like to consider here is $\mc{RS}$ where both
receives and sends cost one unit of time.  Because a receive costs one
unit, every continuation type must be delayed by one tick of the
clock, which we have denoted by prefixing continuations by the $\Next$
modality. This delay is not an artifact of the implemention, but an
inevitable part of the cost model---one reason we have distinguished
the synonyms $\tick$ (delay of one, due to the cost model) and
$\delay$ (delay of one, to correctly time the interactions).  In this
section of examples we will make the same distinction for the
next-time modality: we write $\tock A$ for a step in time mandated by
the cost model, and $\Next A$ for a delay necessitated by a
particular set of process definitions.

As a first approximation, we would have
\begin{sill}
  $\m{stack}_{A} = {\with}\{$ \= $\m{push} : \tock (A \lolli \tock \m{stack}_A),$ \\
  \> $\m{pop} : \tock {\oplus}\{$ \= $\m{none} : \tock \one,$ \\
  \>\> $\m{some} : \tock (A \tensor \tock \m{stack}_A)\,\}\,\}$
\end{sill}
There are several problems with this type.  The stack is a data
structure and has little or no control over \emph{when} elements will
be pushed onto or popped from the stack.  Therefore we should use a
type $\Box \m{stack}_A$ to indicate that the client can choose the
times of interaction with the stack.  While the elements are held by
the stack time advances in an indeterminate manner.  Therefore, the
elements stored in the stack must also have type $\Box A$, not $A$
(so that they are always available).
\begin{sill}
  $\m{stack}_{A} = {\with}\{$ \= $\m{push} : \tock (\Box A \lolli \tock \Box \m{stack}_A),$ \\
  \> $\m{pop} : \tock {\oplus}\{$ \= $\m{none} : \tock \one,$ \\
  \>\> $\m{some} : \tock (\Box A \tensor \tock \Box \m{stack}_A)\,\}\,\}$ \\[1ex]
  $x : \Box A, t : \Box \m{stack}_A \vdash \mi{elem} :: (s : \Box \m{stack}_A)$ \\
  $\cdot \vdash \mi{empty} :: (s : \Box \m{stack}_A)$
\end{sill}
This type expresses that the data structure is very efficient in its
response time: there is no additional delay after it receives a
$\m{push}$ and then an element of type $\Box A$ before it can take the
next request, and it will respond immediately to a $\m{pop}$ request.
It may not be immediately obvious that such an efficient
implementation actually exists in the $\mc{RS}$ cost model, but it
does.  We use the implicit form from Section~\ref{sec:recon} omitting
the $\tick$ constructs after each receive and send, and also the
$\whenn$ before each $\m{case}$ that goes along with type $\Box A$.
\begin{sill}
  $s \leftarrow \mi{elem} \leftarrow x\; t =$ \\
  \quad \= $\m{case}\; s\;$ \= $(\, \m{push} \Rightarrow$ \= $y \leftarrow \m{recv}\; s \semi$ \\
  \>\>\> $s' \leftarrow \mi{elem} \leftarrow x\; t \semi$ \hspace{3em} \=\% \textit{previous top of stack, holding $x$} \\
  \>\>\> $s \leftarrow \mi{elem} \leftarrow y\; s'$ \>\% \textit{new top of stack, holding $y$} \\
  \>\> $\mid \m{pop} \Rightarrow$ \> $s.\m{some} \semi$ \\
  \>\>\> $\m{send}\; s\; x \semi$  \>\% \textit{send channel $x$ along $s$} \\
  \>\>\> $s \leftarrow t\,)$ \>\% \textit{$s$ is now provided by $t$, via forwarding} \\[1em]
  $s \leftarrow \mi{empty} =$ \\
  \> $\m{case}\; s\;$ \= $(\, \m{push} \Rightarrow$ \= $y \leftarrow \m{recv}\; s \semi$ \\
  \>\>\> $e \leftarrow \mi{empty} \semi$ \>\% \textit{new bottom of stack} \\
  \>\>\> $s \leftarrow \mi{elem} \leftarrow y\; e$ \\
  \>\> $\mid \m{pop} \Rightarrow$ \> $s.\m{none} \semi$ \\
  \>\>\> $\m{close}\; s\,)$
\end{sill}

The specification and implementation of a queue is very similar.
The key difference in the implementation is that when we receive
a new element we pass it along the chain of processes until it
reaches the end.  So instead of  
\begin{sill}
  $s' \leftarrow \mi{elem} \leftarrow x\; t \semi$ \hspace{3em} \=\% \textit{previous top of stack, holding $x$} \\
  $s \leftarrow \mi{elem} \leftarrow y\; s'$ \>\% \textit{new top of stack, holding $y$}
\end{sill}
we write
\begin{sill}
  $s' \leftarrow \mi{elem} \leftarrow x\; t \semi$ \hspace{3em} \=\% \textit{previous top of stack, holding $x$} \kill
  $t.\m{enq} \semi$ \\
  $\m{send}\; t\; y \semi$ \>\% \textit{send $y$ to the back of the queue} \\
  $s \leftarrow \mi{elem} \leftarrow x\; t$
\end{sill}
These two send operations take two units of time, which must be
reflected in the type: after a channel of type $\Box A$ has been
received, there is a delay of an additional two units of time before
the provider can accept the next request.
\begin{sill}
  $\m{queue}_{A} = {\with}\{$ \= $\m{enq} : \tock (\Box A \lolli \tock \Next \Next \Box \m{queue}_A),$ \\
  \> $\m{deq} : \tock {\oplus}\{$ \= $\m{none} : \tock \one,$ \\
  \>\> $\m{some} : \tock (\Box A \tensor \tock \Box \m{queue}_A)\,\}\,\}$ \\[1ex]
  $x : \Box A, t : \Next \Next \Box \m{queue}_A \vdash \mi{elem} :: (s : \Box \m{queue}_A)$ \\
  $\cdot \vdash \mi{empty} :: (s : \Box \m{queue}_A)$
\end{sill}
Time reconstruction will insert the additional delays in the
$\mi{empty}$ process through subtyping, using
$\Box \m{queue}_A \leq \Next \Next \Box \m{queue}_A$.  We have
syntactically expanded the tail call so the second use of subtyping is
more apparent.
\begin{sill}
  $s \leftarrow \mi{empty} =$ \\
  \quad \= $\m{case}\; s\;$ \= $(\, \m{enq} \Rightarrow$ \= $y \leftarrow \m{recv}\; s \semi$
\hspace{5em} \=\% $y : \Box A \vdash s : \Next \Next \Box \m{queue}_A$ \\
  \>\>\> $e \leftarrow \mi{empty} \semi$ \>\% $y : \Box A, e : \Box \m{queue}_A \vdash s : \Next \Next \Box \m{queue}_A$ \\
  \>\>\> $s' \leftarrow \mi{elem} \leftarrow y\; e \semi$ \>\% $\Box \m{queue}_A \leq \Next \Next \Box \m{queue}_A$ (on $e$) \\
  \>\>\> $s \leftarrow s'$ \>\% $\Box \m{queue}_A \leq \Next \Next \Box \m{queue}_A$ (on $s'$) \\
  \>\> $\mid \m{deq} \Rightarrow$ \> $s.\m{none} \semi$ \\
  \>\>\> $\m{close}\; s\,)$
\end{sill}
The difference between the \emph{response times} of stacks and queues
in the cost model is minimal: both are constant, with the queue being
two units slower.  This is in contrast to the total
work~\cite{Das18arxiv} which is constant for the stack but linear in
the number of elements for the queue.

This difference in response times can be realized by typing clients of
both stacks and queues. We compare clients $S_n$ and $Q_n$ that
insert $n$ elements into a stack and queue, respectively, send the
result along channel $d$, and then terminate.  We show only their type
below, omitting the implementations.
\begin{sill}
 $x_1 : \Box A, \ldots, x_n : \Box A, s : \Box \m{stack}_A \vdash S_n ::
 (d : \Next^{2n}\, (\Box \m{stack}_A \tensor \tock \one))$\\
 $x_1 : \Box A, \ldots, x_n : \Box A, s : \Box \m{queue}_A \vdash Q_n ::
 (d : \Next^{4n}\, (\Box \m{queue}_A \tensor \tock \one))$
\end{sill}
The types demonstrate that the total execution time of $S_n$ is only $2n+2$,
while it is $4n+2$ for $Q_n$. The difference comes from the difference in
response times. Note that we can infer precise execution times, even
in the presence of the $\Box$ modality in the stack and queue types.

\subsection{Parametric Rates: Lists and Streams}

Lists describe an interface that sends either $\m{nil}$ and ends the
session, or sends $\m{cons}$ followed by a channel of some type $A$ and
then behaves again like a list.  In the cost-free setting:
\begin{sill}
  $\m{list}_A = {\oplus}\{\, \m{cons} : A \tensor \m{list}_A, \m{nil} : \one\,\}$
\end{sill}
Here is the straightforward definition of $\mi{append}$.
\begin{sill}
  $l_1 : \m{list}_A, l_2 : \m{list}_A \vdash \mi{append} : (l : \m{list}_A)$ \\[1ex]
  $l \leftarrow \mi{append} \leftarrow l_1\; l_2 =$ \\
  \quad \= $\m{case}\; l_1\;$ \= $(\, \m{cons} \Rightarrow$ \= $x \leftarrow \m{recv}\; l_1 \semi$ \hspace{5em} \= \% receive element $x$ from $l_1$ \\
  \>\>\> $l.\m{cons} \semi \m{send}\; l\; x \semi$ \> \% send $x$ along $l$\\
  \>\>\> $l \leftarrow \mi{append} \leftarrow l_1\; l_2$ \> \% recurse \\
  \>\> $\mid \m{nil} \Rightarrow$ \> $\m{wait}\; l_1 \semi$ \> \% wait for
  $l_1$ to close\\
  \>\>\> $l \leftarrow l_2\,)$ \> \% identify $l$ and $l_2$
\end{sill}
In this example we are interested in analyzing the timing of several
processes precisely, but parametrically over an arrival rate.  Because
it takes two units of time to copy the inputs to the outputs, the
arrival rate needs to be at least 2, which we represent by writing it
as $r+2$. Since we append the two lists, the second list will be idle
while we copy the elements from the first list to the output. We could
give this list type $\Box(-)$, but we can also precisely determine the
delay if we index lists by the number of elements.  We write
$\m{list}_A[n]$ for a list sending exactly $n$ elements.  We have the
following types in the $\mc{RS}$ cost model:
\begin{sill}
  $\m{list}_A[0] = {\oplus}\{$ \= $\m{nil} : \tock \one \,\}$ \\
  $\m{list}_A[n+1] = {\oplus}\{$ \= $\m{cons} : \tock (\Box A \tensor \tock \Next^{r+2}\, \m{list}_A[n])\,\}$
\end{sill}
As before, the tick marks account for the delay mandated by the cost
model.  The $\Next^{r+2}$ accounts for the arrival rate of $r+2$. We
use type $\Box A$ for the elements since they will be in the lists for
an indeterminate amount of time.  The precise type of $\mi{append}$
then becomes
\begin{sill}
  $l_1 : \m{list}_A[n], l_2 : \Next^{(r+4)n+2}\, \m{list}_A[k]
  \vdash \mi{append} :: (l : \Next \Next \m{list}_A[n+k])$
\end{sill}
It expresses that the output list has the same rate as the input
lists, but with a delay of 2 cycles relative to $l_1$.  The channel
$l_2$ has to sit idle for $r+4$ cycles for each element of $l_1$,
accounting for the two inputs along $l_1$ and two outputs along $l_2$.
It takes 2 further cycles to input the $\m{nil}$ and the end token for
the list.

With our type system and just a little bit of arithmetic we can verify
this type, checking the definition twice: once for a list of length
$0$ and once for $n+1$.  We show here the latter, where
$l_1 : \m{list}_A[n+1]$.
\begin{tabbing}
  $l \leftarrow \mi{append} \leftarrow l_1\; l_2 =$ \\
  \= $\m{case}\; l_1\;$ \= $(\, \m{cons} \Rightarrow$ \hspace{2em}\=\% 
$l_1{:}\Box A \tensor \tock \Next^{r+2}\, \m{list}_A[n],
l_2 : [\Next^{(r+4)(n+1)+2}\,\m{list}_A[k]]_L^{-1} \vdash l : \Next \m{list}_A[(n+1)+k]$ \\
  \>\> $x \leftarrow \m{recv}\; l_1 \semi$ \>\% 
$x{:}\Box A, l_1{:}\Next^{r+2}\, \m{list}_A[n],
l_2 : [\Next^{(r+4)(n+1)+2}\,\m{list}_A[k]]_L^{-2} \vdash l : \m{list}_A[(n+1)+k]$ \\
  \>\> $l.\m{cons} \semi$ \>\%
$x{:}\Box A, l_1{:}\Next^{r+1}\, \m{list}_A[n],
l_2 : [\Next^{(r+4)(n+1)+2}\,\m{list}_A[k]]_L^{-3} \vdash l : \Box A \tensor \tock
\Next^{r+2}\, \m{list}_A[n+k]$ \\
  \>\> $\m{send}\; l\; x \semi$ \>\%
$l_1{:}\Next^r\, \m{list}_A[n],
l_2 : [\Next^{(r+4)(n+1)+2}\,\m{list}_A[k]]_L^{-4} \vdash l : \Next^{r+2}\, \m{list}_A[n+k]$ 
\\
  \>\> \% $\delay^{\red{r}}$ \>\%
$l_1{:}\m{list}_A[n], 
l_2 : [\Next^{(r+4)(n+1)+2}\,\m{list}_A[k]]^{-4-r} \vdash l : \Next^{2}\, \m{list}_A[n+k]$ \\
  \>\>\>\%
$l_1{:}\m{list}_A[n], 
l_2 : \Next^{(r+4)n+2}\,\m{list}_A[k] \vdash l : \Next \Next \m{list}_A[n+k]$ \\
  \>\> $l \leftarrow \mi{append} \leftarrow l_1\; l_2$ \\
  \>\> $\mid \m{nil} \Rightarrow \ldots\,)$
\end{tabbing}
We showed only the one delay by $r$ units inserted by time
reconstruction since it is the critical step. The case for $\m{nil}$
does not apply for $l_1 : \m{list}_A[n+1]$.  Here is the typing
derivation when $l_1 : \m{list}_A[0]$ where the $\m{cons}$ branch does
not apply.
\begin{sill}
  $l \leftarrow \mi{append} \leftarrow l_1\; l_2 = $\\
  \quad \= $\m{case}\;l_1\;$ \= $(\,\m{cons} \Rightarrow \ldots$ \\
  \>\> $\mid \m{nil} \Rightarrow $ \= \hspace{5em}\=\%
$l_1 : \one, l_2 : \Next \m{list}_A[k] \vdash l : \Next\m{list}_A[k]$ \\
  \>\>\> $\m{wait}\; l_1 \semi$ \>\% $l_2 : \m{list}_A[k] \vdash l : \m{list}_A[k]$\\
  \>\>\> $l \leftarrow l_2\,)$
\end{sill}

As a related example we consider a process that alternates the
elements between two infinite input streams. At first we might expect
if the two input streams come in with a rate of 2 then the output
stream will have a rate of 1.  However, in the $\mc{RS}$ cost model
one additional tick is required for sending on the messages which
means that the input streams need to have rate 3 and be offset by 2
cycles.  We parameterize the type of stream by its rate $k$
\begin{sill}
  $\m{stream}^k_A = \Box A \tensor \tock \Next^k\, \m{stream}_A^k$ \\[1ex]
  $l_1 : \m{stream}_A^3, l_2 : \Next^2\, \m{stream}_A^3 \vdash \mi{alternate} :: (l : \Next^1\, \m{stream}_A^1)$ \\[1ex]
  $l \leftarrow \mi{alternate} \leftarrow l_1\; l_2 =$ \\
  \quad \= $x \leftarrow \m{recv}\; l_1 \semi$ \hspace{5em}\=\% $x : \Box A,$ \= $l_1 : \Next^3\, \m{stream}_A^3, l_2 : \Next^1\, \m{stream}_A^3 \vdash l : \m{stream}_A^1$ \\
  \> $\m{send}\; l\; x \semi$ \>\% \> $l_1 : \Next^2\, \m{stream}_A^3, l_2 : \m{stream}_A^3 \vdash l : \Next^1\, \m{stream}_A^1$ \\
  \> $l \leftarrow \mi{alternate} \leftarrow l_2\; l_1\,)$
\end{sill}
A more general parametric type for the same code would be
\begin{sill}
  $l_1 : \m{stream}_A^{2k+3}, l_2 : \Next^{k+2}\, \m{stream}_A^{2k+3} \vdash \mi{alternate} :: (l : \Next^1\, \m{stream}_A^{k+1})$
\end{sill}
from which we can recover the more specialized one with $k = 0$.

\subsection{Span Analysis: Trees}

We use trees to illustrate an example that is typical for fork/join
parallelism and computation of \emph{span}.  In order to avoid
integers, we just compute the parity of a binary tree of height $h$
with boolean values at the leaves.  We do not show the obvious
definition of $\mi{xor}$, which in the $\mc{RS}$ cost model requires a
delay of four from the first input.
\begin{sill}
  $\m{bool} = {\oplus}\{\,\m{b0} : \tock \one, \m{b1} : \tock \one\,\}$ \\[1ex]
  $a : \m{bool}, b: \Next^2\, \m{bool} \vdash \mi{xor} :: (c : \Next^4\, \m{bool})$
\end{sill}
In the definition of $\mi{leaf}$ and $\mi{node}$ we have explicated
the delays inferred by time reconstruction, but not the $\tick$
delays.  The type of $\m{tree}[h]$ gives the \emph{span} of this
particular parallel computation as $5h+2$.  This is the time it takes
to compute the parity under maximal parallelism, assuming that
$\mi{xor}$ takes 4 cycles as shown in the type above.
\begin{sill}
  $\m{tree}[h] = {\with}\{\,\m{parity} : \tock \Next^{5h+2}\, \m{bool}\,\}$ \\[1ex]
  $\cdot \vdash \mi{leaf} :: (t : \m{tree}[h])$ \\[1ex]
  $t \leftarrow \mi{leaf} =$ \\
  \quad \= $\m{case}\; t$ \= $(\,\m{parity} \Rightarrow$ \=\hspace{7em}\=\% $\cdot \vdash t : \Next^{5h+2}\, \m{bool}$ \\
  \>\>\> \% $\delay^{\red{5h+2}}$ \>\% $\cdot \vdash t : \m{bool}$ \\
  \>\>\> $t.\m{b0} \semi$ \>\% $\cdot \vdash t : \one$ \\
  \>\>\> $\m{close}\; t\,)$ \\[1em]
  $l : \Next^1 \m{tree}[h], r : \Next^3\, \m{tree}[h] \vdash \mi{node} :: (t : \mi{tree}[h+1])$ \\[1ex]
  $t \leftarrow \mi{node} \leftarrow l\; r =$ \\
  \> $\m{case}\;t$ \> $(\,\m{parity} \Rightarrow$ \>\>\% $l : \m{tree}[h], r: \Next^2\, \m{tree}[h] \vdash t : \Next^{5(h+1)+2}\, \m{bool}$ \\
  \>\>\> $l.\m{parity} \semi$ \>\% $l : \Next^{5h+2}\, \m{bool}, r : \Next^1 \m{tree}[h] \vdash t : \Next^{5(h+1)+1}\, \m{bool}$ \\
  \>\>\> \% $\delay$          \>\% $l : \Next^{5h+1}\, \m{bool}, r : \m{tree}[h] \vdash t : \Next^{5h+5}\, \m{bool}$ \\
  \>\>\> $r.\m{parity} \semi$ \>\% $l : \Next^{5h}\, \m{bool}, r: \Next^{5h+2}\, \m{bool} \vdash t : \Next^{5h+4}\, \m{bool}$\\
  \>\>\> \% $\delay^{\red{5h}}$ \>\% $l : \m{bool}, r:\Next^2\, \m{bool} \vdash t : \Next^4\, \m{bool}$ \\
  \>\>\> $t \leftarrow \mi{xor} \leftarrow l\; r\,)$
\end{sill}
The type $l : \Next^1\, \m{tree}[h]$ comes from the fact that, after
receiving a $\m{parity}$ request, we first send out the $\m{parity}$
request to the left subtree $l$.  The type $r : \Next^3\, \m{tree}[h]$
is determined from the delay of 2 between the two inputs to
$\mi{xor}$.  The magic number 5 in the type of $\m{tree}$ was derived
in reverse from setting up the goal of type-checking the $\mi{node}$
process under the constraints already mentioned.  We can also think of
it as 4+1, where 4 is the time to compute the exclusive or at each
level and 1 as the time to propagate the $\m{parity}$ request down each
level.

As is often done in abstract complexity analysis, we can also
impose an alternative cost model.  For example, we may count
only the number of calls to $\mi{xor}$ while all other operations
are cost free.  Then we would have
\begin{sill}
  $a:\m{bool}, b:\m{bool} \vdash \mi{xor} :: (c:\Next \m{bool})$ \\
  $\m{tree}[h] = {\with}\{\, \m{parity} : \Next^h\, \m{bool}\, \}$ \\[1em]
  $\cdot \vdash \m{leaf} :: (t : \m{tree}[h])$ \\
  $l : \m{tree}[h], r : \m{tree}[h] \vdash \m{node} :: (t : \m{tree}[h+1])$
\end{sill}
with the same code but different times and delays from before.  The
reader is invited to reconstruct the details.

\subsection{A Higher-Order Example}

As an example of higher-order programming we show how to encode a
process analogue of a fold function.  Because our language is
purely linear the process to fold over a list has to be recursively
defined.  In the cost-free setting we would write
\begin{sill}
  $\m{folder}_{AB} = {\with}\{$ \= $\m{next} : A \lolli (B \lolli (B \tensor \m{folder}_{AB})),
  \m{done} : \one\,\}$ \\[1ex]
  $l : \m{list}_A, f : \m{folder}_{AB}, b : B \vdash \mi{fold} :: (r : B)$ \\[1ex]
  $r \leftarrow \mi{fold} \leftarrow l\; f\; b =$ \\
  \quad \= $\m{case}\; l\;$ \= $(\,\m{cons} \Rightarrow$ \= $x \leftarrow \m{recv}\; l \semi$ \\
  \>\>\> $f.\m{next} \semi \m{send}\; f\; x \semi \m{send}\; f\; b \semi$
  \hspace{2em} \= \% send $x$ and $b$ to folder $f$ \\
  \>\>\> $y \leftarrow \m{recv}\; f \semi$
  $r \leftarrow \mi{fold} \leftarrow l\; f\; y$
  \> \% receive $y$ from $f$ and recurse \\
  \>\> $\mid \m{nil} \Rightarrow$ \> $\m{wait}\; l \semi$ 
  $f.\m{done} \semi \m{wait}\; f \semi r \leftarrow b\,)$
\end{sill}
If we want to assign precise temporal types to the $\mi{fold}$ process
then the incoming list should have a delay of at least 4 between
successive elements. Working backwards from the code we obtain
the following types.
\begin{sill}
  $\m{list}_A[0] = {\oplus}\{\,\m{nil} : \tock \one\,\}$ \\
  $\m{list}_A[n+1] = {\oplus}\{\,\m{cons} : \tock (\Box A \tensor \tock \Next^{k+4}\, \m{list}_A[n])\, \}$ \\[1ex]
  $\m{folder}_{AB} = {\with}\{\,\m{next} : \tock (\Box A \lolli \tock (B \lolli \tock \Next^k (\Next^5 B \tensor
\tock \Next^2\, \m{folder}_{AB}))), \m{done} : \tock \one\,\}$ \\[1ex]
  $l : \m{list}_A[n], f : \Next^2\, \m{folder}_{AB}, b : \Next^4 B \vdash \mi{fold} :: (r : \Next^{(k+5)n+4}\, B)$
\end{sill}
The type of $\mi{fold}$ indicates that if the combine function of
$\m{folder}_{AB}$ takes $k$ time units to compute, the result $r$ is
produced after $(k+5)n+4$ time units in the $\mc{RS}$ cost model.

\section{Further Related Work}
\label{sec:related}



In addition to the related work already mentioned, we highlight a few
related threads of research.

\paragraph{Session types and process calculi.}
In addition to the work on timed multiparty session
types~\cite{TMSTConcur14,Neykova14beat}, time has been introduced into
the $\pi$-calculus (see, for example, \citet{SaeedloeiTPC13}) or
session-based communication primitives (see, for example,
\citet{Lopez09places}) but generally these works do not develop a type
system.  \citet{Kobayashi02iandc} extends a (synchronous)
$\pi$-calculus with means to count parallel reduction steps.  He then
provides a type system to verify time-boundedness.  This is more
general in some dimension than our work because of a more permissive
underlying type and usage system, but it lacks internal and external
choice, genericity in the cost model, and provides bounds rather than
a fine gradation between exact and indefinite times.
Session types can also be derived by a Curry-Howard interpretation of
\emph{classical linear logic}~\citep{Wadler12icfp} but we are not
aware of temporal extensions. We conjecture that there is a classical
version of our system where $\Box$ and $\Dia$ are dual and $\Next$ is
self-dual.


\paragraph{Reactive programming.}
Synchronous dataflow languages such as Lustre~\citep{LustreIEEE},
Esterel~\citep{Esterel92}, or Lucid
Synchrone~\citep{Pouzet06lucidsynchrone} are time-synchronous with
uni-directional flow and thus may be compared to the fragment of our
language with internal choice ($\oplus$) and the next-time modality
($\Next A$), augmented with existential quantification over basic data
values like booleans and integers (which we have omitted here only for
the sake of brevity).  The global clock would map to our underlying
notion of time, but data-dependent local clocks would have to be
encoded at a relatively low level using streams of option type,
compromising the brevity and elegance of these languages.
Furthermore, synchronous dataflow languages generally permit sharing
of channels, which, although part of many session-typed
languages~\cite{Caires10concur,Balzer17icfp}, require further
investigation in our setting. On the other hand, we support a number
of additional types such as external choice ($\with$) for
bidirectional communication and higher-order channel-passing
($A \lolli B$, $A \tensor B$).
In the context of functional reactive programming, a
Nakano-style~\cite{Nakano00} temporal modality has been used to ensure
productivity~\cite{KrishnaswamiB11}. A difference in our work is that
we consider concurrent processes and that our types prescribe the
timing of messages.

\paragraph{Computational interpretations of $\Next A$.} A first
computational interpretation of the next-time modality under a
proofs-as-programs paradigm was given by \citet{Davies96lics}.  The
basis is natural deduction for a (non-linear!) intutionistic
linear-time temporal logic with only the next-time modality. Rather
than capturing cost, the programmer could indicate \emph{staging} by
stipulating that some subexpressions should be evaluated ``at the next
time''.  The natural operational semantics then is a
logically-motivated form of \emph{partial evaluation} which yields a
residual program of type $\Next A$. This idea was picked up
by \citet{Feltman16esop} to instead \emph{split} the program
statically into two stages where results from the first stage are
communicated to the second. Again, neither linearity (in the sense of
linear logic), nor any specific cost semantics appears in this work.

\paragraph{Other techniques.}
Inferring the cost of concurrent programs is a fundamental
problem in resource analysis. \citet{HoffmannESOP15}
introduce the first automatic analysis for deriving bounds
on the worst-case evaluation cost of parallel first-order
functional programs. Their main limitation is that they
can only handle parallel computation; they don't support
message-passing or shared memory based concurrency.
\citet{BlellochPipelining97} use pipelining~\citep{PaulPipelining}
to improve the complexity of parallel algorithms. However,
they use futures~\citep{Halstead:MULTILISP85}, a parallel
language construct to implement pipelining without the
programmer having to specify them explicitly. The runtime of
algorithms is determined by analyzing the work and depth
in a language-based cost model. The work relates to ours
in the sense that pipelines can have delays, which can be data
dependent. However, the algorithms they analyze have no
message-passing concurrency or other synchronization
constructs. \citet{AlbertSAS15} devised a static analysis for
inferring the parallel cost of distributed systems. They first
perform a block-level analysis to estimate the serial cost,
then construct a distributed flow graph (DFG) to capture the
parallelism and then obtain the parallel cost by computing the
maximal cost path in the DFG. However, the bounds they
produce are modulo a points-to and serial cost analysis. Hence,
an imprecise points-to analysis will result in imprecise parallel
cost bounds. Moreover, since their technique is based on
static analysis, it is not compositional and a whole program
analysis is needed to infer bounds on each module.
Recently, a bounded linear typing discipline~\citep{GhicaESOP14}
modeled in a semiring was proposed for resource-sensitive
compilation. It was then used to calculate and control execution
time in a higher-order functional programming language.
However, this language did not support recursion.


\section{Conclusion}

We have developed a system of temporal session types that can
accommodate and analyze concurrent programs with respect to a variety
of different cost models. Types can vary in precision, based on
desired and available information, and includes latency, rate,
response time, and span of computations. It is constructed in a
modular way, on top of a system of basic session types, and therefore
lends itself to easy generalization. We have illustrated the type
system through a number of simple programs on streams of bits, binary
counters, lists, stacks, queues, and trees.  Time reconstruction and
subtyping go some way towards alleviating demands on the programmer
and supporting program reuse. In ongoing work we are exploring an
implementation with an eye toward practical aspects of time
reconstruction and, beyond that, automatic resource analysis based on
internal measures of processes such as the length of a list or the
height of a tree---so far, we have carried out these analyses by hand.

\bibliography{fp,refs}

\end{document}